\documentclass[journal]{IEEEtran}

\usepackage{amsmath,amssymb,amsthm,mathtools}
\usepackage{bm}
\usepackage{bbm}
\usepackage{mathrsfs}
\usepackage{latexsym}
\usepackage{graphicx}
\usepackage{cite}
\usepackage{psfrag}
\usepackage{array}
\usepackage{makecell}
\usepackage{siunitx}
\usepackage{color,comment}
\usepackage{float}
\usepackage{stfloats}

\usepackage{algorithm}
\usepackage{algorithmic}

\ifCLASSOPTIONcompsoc
    \usepackage[caption=false,font=normalsize,labelfont=sf,textfont=sf,labelformat=simple]{subfig}
\else
    \usepackage[caption=false,font=footnotesize,labelformat=simple]{subfig}
\fi

\newtheorem{theorem}{Theorem}

\theoremstyle{definition}
\newtheorem{definition}{Definition}

\theoremstyle{remark}

\allowdisplaybreaks

\begin{document}
\bstctlcite{IEEEexample:BSTcontrol}

\title{ISAC with Co-Prime Arrays: Virtual-Aperture Sensing and Uplink--Downlink Communications}

\author{{Jing Zhang,~\IEEEmembership{Member,~IEEE,}
Yuxiao Liu,
Jiayi Sun,\\
Junliang Ye,~\IEEEmembership{Member,~IEEE,}
and Derrick Wing Kwan Ng,~\IEEEmembership{Fellow,~IEEE}%
\thanks{Part of this work has been presented at the IEEE PIMRC 2026.}}}

\IEEEaftertitletext{\vspace{-0.8cm}}

\maketitle

\pagestyle{empty}
\thispagestyle{empty}

\begin{abstract}
Integrated sensing and communication (ISAC) enables simultaneous communication and environmental sensing in unmanned aerial vehicle (UAV) networks, but its performance is constrained by the physical antenna aperture and residual self-interference (SI) in full-duplex (FD) sensing. To address these issues, we propose a shared-aperture ISAC architecture in which a sparse co-prime array (CPA) is embedded in a uniform linear array (ULA) grid for FD sensing, while the remaining antenna positions support time-division duplexing (TDD) communication. We characterize the sensing performance through an order-wise Cramer--Rao bound (CRB) analysis, showing that the CPA achieves a stronger asymptotic sensing gain than the partitioned ULA benchmark in both single-target and nondegenerate multi-target scenarios. We further reveal a space--time sampling tradeoff under the same physical aperture. Based on the proposed architecture, we formulate a non-convex joint resource allocation problem that maximizes the weighted downlink--uplink sum rate by jointly designing the sensing transmit covariance, downlink precoder, and uplink receive beamformers under sensing accuracy, BS transmit-power, communication QoS, and residual SI constraints. An alternating-optimization-based algorithm is developed. Simulations demonstrate consistent performance gains over the considered baselines and confirm the complementary benefits of the CPA virtual aperture and sensing covariance optimization.
\end{abstract}

	\IEEEpeerreviewmaketitle

\vspace{-0.5cm}
\section{introduction}

Integrated sensing and communication (ISAC) has emerged as a key enabling paradigm for unlocking the full potential of future sixth-generation (6G) wireless networks, where communication systems are expected to provide not only high-rate data transmission but also native sensing and environmental awareness capabilities \cite{9390169}. On the one hand, ISAC enables effective spectrum and hardware reuse, allowing radar sensing and multiuser communication to be jointly supported within a unified multi-input multi-output (MIMO) transmit platform \cite{10680586}. On the other hand, it endows wireless networks with radio perception capabilities, thereby providing environmental information that can facilitate beam management, mobility tracking, and resource allocation in dynamic scenarios \cite{9296833}. Owing to these advantages, ISAC is considered promising for a broad range of emerging applications, including autonomous driving, intelligent transportation, industrial Internet of Things (IIoT), smart robotics, and low-altitude aerial networks \cite{9390169,10012421,10418473}. 

In general, since sensing and communication signals are engineered for different objectives, ISAC systems usually require careful resource allocation and signal processing across the time, frequency, spatial, and waveform domains. In fact, existing studies have evaluated the performance gains obtained by sharing spectrum resources and radio-frequency (RF) hardware at the physical layer of ISAC systems to achieve goals such as improving spectral efficiency, reducing costs, decreasing hardware size and weight, and enabling mutual assistance and enhancement between communication and sensing. For spectrum-efficient integration, dual-function radar-communication (DFRC)
systems have been investigated as a promising solution to the spectrum
congestion problem \cite{8828023}. From the hardware
integration perspective, DFRC designs have shown that implementing sensing
and communication on the same hardware platform can substantially reduce the duplicated
hardware costs compared with separately deployed radar and communication
systems \cite{9093221}. Moreover, ISAC has been recognized as
a viable approach to reducing the number of antennas, system size, weight,
and power consumption, by reusing RF hardware and physical-layer resources
\cite{9737357}. In addition to such integration gain, ISAC can also
enable coordination gains, where sensing information is exploited to assist
communication tasks such as predictive beamforming and beam tracking
\cite{9171304}.

Although hardware-level integration of ISAC systems can reduce overall power consumption, system size, and information exchange latency, several open issues remain unsolved, posing obstacles for achieving a seamless and effective reuse of baseband and RF transceiver hardware of ISAC systems. In general, there is a fundamental mismatch between the dynamic range requirements of communication and radar systems. Specifically, for the transmitter, the sensing function requires a high-power amplifier (PA) to transmit constant-envelope single-carrier signals such as frequency modulation continuous wave (FMCW) signals. However, communication requires a larger operating dynamic range to linearly and with low distortion transmit wide band multi-carrier signals such as orthogonal frequency division multiplexing (OFDM) \cite{9585321}. Another fundamental issue is the waveform design for integrated sensing and communication signals. In practice, OFDM waveforms in communication system usually suffer from a non-constant envelope issue \cite{6476061}, resulting in a high peak-to-average power ratio (PAPR). Meanwhile, sensing systems typically employ power amplifiers (PAs) with a small dynamic range, which are nonlinear amplifiers generally operating near saturation to achieve high transmission efficiency. When an integrated OFDM waveform signal passes through such a PA, it inevitably undergoes a certain degree of nonlinear distortion, leading to reduced transmitter power efficiency, distortion of the OFDM signal, and degradation of both radar and communication performance \cite{7944478,7944480}.
% Meanwhile, to reduce the implementation complexity associated with receiver-side separation of sensing and communication signals, time-division and frequency-division schemes are usually adopted. In \cite{9858656}, a UAV-enabled integrated periodic sensing and communication mechanism was proposed, where sensing tasks are periodically performed over selected time slots while UAV trajectory, sensing-time selection, user association, and beamforming are jointly optimized.

To address these issues, ISAC systems typically employ time-division duplexing (TDD) schemes, referred to as TDD-ISAC, where the receiver and transmitter operate in a time-division manner. For continuous-wave sensing systems, the receiver and transmitter are required not only to operate almost simultaneously, but also continuously in order to constantly detect echo signals. Meanwhile, communication capacity is also potentially degraded due to dedicated time slots reserved for sensing echoes. Indeed, the temporal separation between communication and sensing prevents simultaneous operation, thereby constraining real-time responses to highly dynamic environmental changes. To enable communication and sensing to fully reuse the transceiver chain, full-duplex (FD) technology combined with self-interference cancellation allows ISAC devices to transmit and receive simultaneously, effectively eliminating the sensing coverage gaps and improving the communication capacity \cite{9724187,10663787}. 
To effectively cancel self-interference (SI), the transmitting and receiving antennas of sensing system usually separates and joint transmitter-receiver beamforming is employed to align the null of the transmit antenna pattern with the main lobe of the receive antenna. The antenna switch enables dynamic configuration of transmit and receive functionalities within the same antenna array, is discussed in some works.
In \cite{10810291}, a dynamically partitioned monostatic multiple-input multiple-output (MIMO)-ISAC array was proposed, where each base station (BS) antenna element can be flexibly configured as either a transmit or receive antenna to minimize direction of arrival (DOA) estimation error while satisfying communication signal to interference plus noise ratio (SINR) and power constraints.  Also, in \cite{gao2026llmenabledantennapartitioningbeamforming}, a segmented pinching-antenna-assisted ISAC architecture was proposed, where segment-wise transmit/receive partitioning, antenna deployment, and beamforming are jointly optimized for varying user and target configurations.

Although separating communication and sensing functions can simplify hardware architecture and receiver-side signal processing, several fundamental obstacles remain to achieving high data-rate communication in practical ISAC systems. On the one hand, the limited BS transmit power must be judiciously allocated between downlink communication and dedicated sensing waveforms. Allocating more power to sensing can strengthen the echo signals and improve estimation accuracy, but it leaves less transmit power for downlink communication and may also increase the residual SI experienced during uplink reception. In contrast, allocating excessive BS power to downlink communication reduces the resources available for target illumination, thereby requiring a careful sensing--communication power tradeoff. On the other hand, in multi-target and multiuser scenarios, sensing and communication signals coexist within the same spatial aperture and introduce coupled interference effects. Therefore, an effective ISAC architecture should jointly consider power allocation, interference suppression, and spatial degrees of freedom to simultaneously support high-resolution sensing and reliable high-rate communication.

In this paper, we investigate a co-located TDD communication and FD sensing ISAC system for UAV networks, where sensing and communication share a common physical aperture. Specifically, the BS employs an active co-prime array (CPA) sensing structure embedded in a common uniform linear array (ULA) grid. The sensing transmit and receive subarrays exploit the enlarged virtual aperture induced by the CPA geometry, while the remaining antenna positions are reused as TDD communication transceiver antennas for downlink transmission and uplink reception. This shared-aperture architecture enables continuous FD sensing while supporting bidirectional communication without requiring an additional dedicated communication array.

We analytically characterize the sensing performance of the proposed architecture by deriving order-wise bounds on the angle-related CRB and comparing its asymptotic behavior with that of a ULA benchmark. We further quantify the associated space--time sampling tradeoff in terms of the required number of sensing snapshots. Building upon the proposed architecture, we formulate a joint resource allocation problem to maximize the weighted downlink--uplink communication sum rate by jointly designing the sensing transmit precoder, downlink communication precoder, and uplink receive beamformers, subject to the sensing accuracy, total BS transmit-power, communication QoS, and residual SI constraints. To tackle the resulting non-convex problem, we develop an AO-based iterative algorithm incorporating fractional programming, closed-form uplink receive beamforming, convex downlink precoder optimization, and semidefinite programming for the sensing transmit covariance.

The main contributions of this paper are summarized as follows.

\begin{itemize}
\item We propose a shared-aperture ISAC antenna architecture, where a sparse co-prime sensing array is embedded into a common ULA grid and the remaining antenna positions within the same physical aperture are reused as TDD communication transceiver antennas. This architecture enables continuous high-resolution FD sensing while supporting both downlink transmission and uplink reception in UAV networks.

\item We establish an order-wise CRB analysis for the proposed CPA-based sensing architecture and conduct an asymptotic comparison with a ULA benchmark in both single-target and nondegenerate multi-target scenarios. The results demonstrate that the CPA geometry provides a stronger asymptotic sensing advantage, thereby offering theoretical support for the CRB improvement enabled by the enlarged virtual aperture.

\item We further identify and quantify a space--time sampling tradeoff under a fixed physical aperture. Although the sparse CPA exploits an enlarged virtual aperture with fewer physical sensing elements, its estimation-accuracy loss relative to the same-aperture ULA generally needs to be compensated for increasing the number of snapshots.

\item We formulate a joint resource allocation problem for the proposed shared-aperture architecture to maximize the weighted downlink--uplink communication sum rate by jointly optimizing the sensing transmit design, downlink precoder, and uplink receive beamformers under the sensing CRB, total BS transmit-power, downlink/uplink QoS, and residual SI constraints. An AO-based iterative algorithm is developed by combining fractional programming, closed-form uplink receive beamforming, convex downlink precoder optimization, and semidefinite programming for the sensing transmit covariance.
\end{itemize}

% The remainder of this paper is organized as follows. Section~II
% introduces the system model and describes the considered ISAC architecture. In
% Section~III, we formulate a joint sensing-communication beamforming optimization problem
% that minimizes the sensing CRB, subject to downlink communication SINR constraints and a total transmit power budget. Section~IV develops a tractable convex SDP reformulation that efficiently solve the proposed non-convex optimization problem. Simulation results and performance comparisons with baselines are
% provided in Section~V, and finally, Section VI concludes the paper.

The remainder of this paper is organized as follows. Section~II introduces the system model and the proposed shared-aperture ISAC architecture. Section~III analyzes the order-wise sensing CRB and formulates the joint resource allocation problem for maximizing the weighted downlink--uplink communication sum rate under the sensing accuracy, total BS transmit-power, communication QoS, and residual SI constraints. Section~IV develops the FP-based AO solution framework for jointly optimizing the communication beamformers and sensing transmit design. Simulation results and performance comparisons with the baseline schemes are presented in Section~V, and Section~VI concludes the paper.

\textit{Notation:} $(\cdot)^{\mathsf T}$, $(\cdot)^{\mathsf H}$, and $(\cdot)^*$ denote the transpose, Hermitian transpose, and element-wise complex conjugate, respectively. 
$\operatorname{tr}(\cdot)$, $\operatorname{diag}(\cdot)$, $\operatorname{vec}(\cdot)$, and $\operatorname{rank}(\cdot)$ denote the trace, diagonalization, vectorization, and rank operators, respectively. 
$\otimes$ denotes the Kronecker product. 
$\mathbb{E}\{\cdot\}$ denotes expectation; $\mathrm{Re}\{\cdot\}$ and $\mathrm{Im}\{\cdot\}$ denote the real and imaginary parts. 
$\mathbb{C}$ and $\mathbb{R}$ denote the complex and real fields, respectively.  
$\mathcal{CN}(\mu,\sigma^2)$ denotes a circularly symmetric complex Gaussian distribution with mean $\mu$ and variance $\sigma^2$. 
$\mathbf{I}_N$ denotes the $N\times N$ identity matrix. 
$\mathbf{e}_i$ denotes the $i$-th canonical basis vector with a compatible dimension.
$\frac{\partial f}{\partial x_i}$ denotes the partial derivative with respect to $x_i$.
$\mathbf{A}\succeq\mathbf{0}$ means that $\mathbf{A}$ is Hermitian positive semidefinite.

	\section{System Model}\label{sec2}

    \begin{figure}[t]
    \centering
    \includegraphics[width=\linewidth]{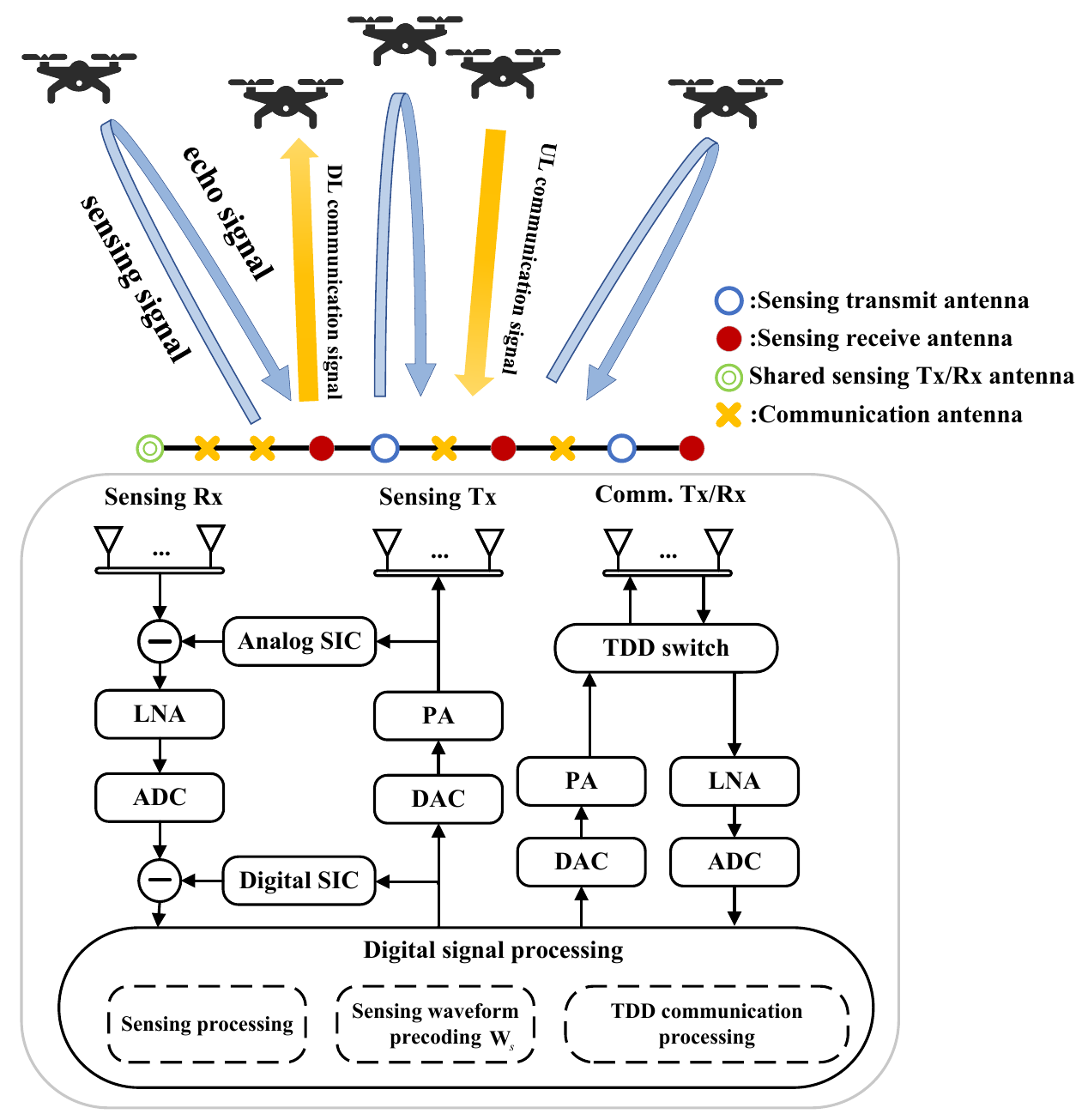}%
    \caption{System model of the considered shared-aperture ISAC UAV network with TDD uplink--downlink communication and FD sensing, where a $(M_1,M_2)=(3,4)$ co-prime sensing array is embedded in a 10-element ULA grid and the remaining four antenna positions are reused as communication transceiver antennas.}
    \label{fig:network model}
\end{figure}

{\color{black}
\subsection{Network Model}
As shown in Fig.~\ref{fig:network model}, we consider a co-located TDD communication and full-duplex (FD) sensing ISAC network
coordinated by a single ground base station (BS). The BS serves $K_c$ communication unmanned aerial vehicles (UAVs) in both downlink and uplink transmission slots and simultaneously senses $K_s$ UAV targets located in the
far field within a symmetric angular sector
\([-\theta_{\max},\theta_{\max}]\). Here, \(\theta_{\max}>0\) denotes the
sector half-width, and the sensing-target angles satisfy
\(\theta_m\in[-\theta_{\max},\theta_{\max}]\).
Each communication UAV is equipped with a single antenna for
downlink reception and uplink transmission in the corresponding
TDD slots.

The BS is equipped with a uniform linear array (ULA) grid
comprising \(M\) antenna positions with inter-element spacing
\(d=\lambda/2\). On this physical aperture, a subset of antenna
positions is configured as an active \((M_1,M_2)\) co-prime array
(CPA) for FD sensing, where \(M_1\) antenna elements are used for
sensing transmission and \(M_2\) antenna elements are used for sensing
reception. The remaining antenna positions are reused as TDD communication transceiver antennas. 
% The communication system implement frequency-division duplex modes, in which the uplink communication and downlink communication occupy the different spectrum. 
% In the frequency domain, let \(B\) denote the total available
% bandwidth. The FD sensing function occupies the whole bandwidth
% \(B\), while the uplink communication function is allocated
% \(B_{\rm c}=\alpha B\), where \(0<\alpha\le 1\). Accordingly, only
% the spectrum-overlapped portion of the uplink communication signal
% contributes to the leakage observed by the sensing receiver.
In this paper, sensing is performed in an active BS-sensing manner using dedicated BS-transmitted sensing waveforms. The uplink and downlink communication signals are used only for data transmission, and their leakage components are modeled as interference at the sensing receiver.
% To specify the sensing-antenna layout, \((M_1,M_2)\) denotes a pair
% of co-prime positive integers. Following the classical co-prime array
% construction, two sparse uniform linear subarrays with \(M_1\) and
% \(M_2\) antenna elements, respectively, are co-located on the same
% physical aperture, with inter-element spacings \(M_2d\) and \(M_1d\),
% respectively. The zeroth elements are aligned, and the associated
% difference coarray provides about \(M_1M_2\) distinct spatial lags.
% The number of distinct physical sensing antenna positions is given by
% $M_s=M_1+M_2-1.$
% We define
% $\rho=\frac{M_1}{M_s+1}$
% as the ratio of sensing transmit antenna elements to the total sensing
% Tx/Rx antenna elements.

\subsection{Co-prime Sensing Array}

To make the CPA deployment explicit, we first describe the
physical sensing-antenna positions. Let \((M_1,M_2)\) be a
pair of co-prime positive integers \footnote{Two positive integers are co-prime if their only common
factor is 1. For example, 4 and 3 have factors
$\{1,2,4\}$ and $\{1,3\}$, respectively, and are therefore
co-prime. In contrast, 2 has factors $\{1,2\}$, so 4 and 2 are
not co-prime because they share the factor 2.}. Following the classical co-prime array
construction~\cite{5609222}, two sparse
ULA subarrays are placed on the same underlying ULA grid
with unit spacing \(d\). In the considered active sensing
architecture, the \(M_1\)-element subarray is used for sensing
transmission, while the \(M_2\)-element subarray is used for
sensing reception. Specifically, their physical positions are
given by
$\mathcal P_{\rm t}
=
\{mM_2d~|~m=0,\ldots,M_1-1\},$
and
$\mathcal P_{\rm r}
=
\{nM_1d~|~n=0,\ldots,M_2-1\},$
respectively.

The two subarrays are aligned at their zeroth elements, which
are shared as a common reference sensing element. Due to the
co-primality of \(M_1\) and \(M_2\), no other physical positions
overlap. Accordingly, the number of distinct physical sensing
antenna positions is
\begin{equation}
M_s=|\mathcal P_{\rm t}\cup\mathcal P_{\rm r}|
=M_1+M_2-1.
\end{equation}
The remaining antenna positions are reused as TDD communication transceiver antennas, whose number is given by $M_c=M-M_s.$

For the active Tx/Rx sensing structure considered in this
paper, the sensing transmit and receive subarrays jointly form
a virtual sensing array through their cross-sum spatial
positions. Specifically, each transmit element at
\(p_{\rm t}\in\mathcal P_{\rm t}\) and each receive element at
\(p_{\rm r}\in\mathcal P_{\rm r}\) generate an equivalent virtual
sensing position \(p_{\rm t}+p_{\rm r}\). Using the physical
positions of the two subarrays, the CPA-induced virtual
sensing array is given by
\begin{equation}
\small
\mathcal V_{\rm CPA}
=
\{(mM_2+nM_1)d~|~
m=0,\ldots,M_1-1,\;
n=0,\ldots,M_2-1\}.
\label{eq:cpa_virtual_array}
\end{equation}
Due to the co-primality of \(M_1\) and \(M_2\), these
\(M_1M_2\) cross-sum positions are distinct. Hence, with only
\(M_s=M_1+M_2-1\) distinct physical sensing antenna
positions, the proposed CPA sensing structure synthesizes
\(M_1M_2\) virtual sensing positions. This enlarged virtual
aperture improves the angular resolution without requiring a
fully populated sensing array.
Finally, the sensing Tx/Rx split ratio is defined as
$\rho=\frac{M_1}{M_s+1}.$

% \subsection{Co-prime sensing array}

% To specify the sensing-antenna layout, \((M_1,M_2)\) denotes
% a pair of co-prime positive integers. Following the classical
% co-prime array construction, the sensing transmit and receive
% subarrays are arranged as two sparse ULAs on the same underlying
% ULA grid. Specifically, the \(M_1\)-element sensing transmit
% subarray has inter-element spacing \(M_2d\), while the
% \(M_2\)-element sensing receive subarray has inter-element
% spacing \(M_1d\). Their zeroth elements are aligned as the
% common reference element. Since \(M_1\) and \(M_2\) are
% co-prime, the two sparse subarrays do not overlap except at
% this reference position. Hence, the number of distinct physical
% sensing antenna positions is given by
% $M_s=M_1+M_2-1.$

% For the active Tx/Rx sensing structure considered in this paper,
% the transmit and receive subarrays jointly induce a virtual sensing
% array through their cross-sum spatial positions. Following the
% co-prime array principle, this virtual array provides \(M_1M_2\)
% distinct virtual degrees of freedom with only \(M_s\) physical
% sensing antenna positions, thereby enlarging the effective aperture
% and improving the angular resolution. We define
% $\rho=\frac{M_1}{M_s+1}$
% as the ratio of sensing transmit antenna elements to the total
% number of sensing Tx/Rx antenna elements.

}

{\color{black}
\subsection{Signal Model}

We consider an ISAC transmission frame consisting of $L$ discrete-time symbol intervals, indexed by $n\in\{1,\ldots,L\}$. During each frame, the BS continuously performs FD sensing with the sensing transmit and receive subarrays. Meanwhile, the communication transceiver antennas operate in a TDD manner, transmitting downlink signals in the downlink slot and receiving uplink signals in the uplink slot. The sensing and communication channels are assumed to remain quasi-static within one frame due to the relatively low mobility of the UAVs over the frame duration.

% Let \(B\) denote the total available bandwidth. The FD sensing function
% occupies the whole bandwidth \(B\), while the uplink communication
% function is allocated \(B_{\rm c}=\alpha B\), where \(0<\alpha\leq 1\).
% Only the spectrum-overlapped part of the uplink communication signal
% contributes to the leakage observed by the sensing receiver.

For sensing transmission, the BS transmits dedicated sensing
waveforms through the \(M_1\)-element sensing transmit subarray.
The sensing transmit vector is given by
\begin{equation}
\mathbf x_{\rm s}(n)
=
\mathbf W_{\rm s}\mathbf s_{\rm s}(n),
\label{eq:sensing_transmit_signal}
\end{equation}
where \(\mathbf x_{\rm s}(n)\in\mathbb C^{M_1\times 1}\),
\(\mathbf W_{\rm s}\in\mathbb C^{M_1\times d_{\rm s}}\), and
\(\mathbf s_{\rm s}(n)\in\mathbb C^{d_{\rm s}\times 1}\) denote
the sensing transmit vector, sensing precoder, and sensing symbol
vector, respectively. With
\(\mathbb E\{\mathbf s_{\rm s}(n)\mathbf s_{\rm s}^{H}(n)\}
=\mathbf I_{d_{\rm s}}\), the sensing transmit covariance induced
by \(\mathbf W_{\rm s}\) is given by
\begin{equation}
\mathbf R_{\rm s}(\mathbf W_{\rm s})
=
\mathbb E\{\mathbf x_{\rm s}(n)\mathbf x_{\rm s}^{H}(n)\}
=
\mathbf W_{\rm s}\mathbf W_{\rm s}^{H}
\in\mathbb C^{M_1\times M_1}.
\label{eq:sensing_covariance}
\end{equation}

Let \(\mathbf X_{\rm s}=[\mathbf x_{\rm s}(1),\ldots,\mathbf x_{\rm s}(L)]\in
\mathbb C^{M_1\times L}\) denote the sensing waveform matrix over
\(L\) snapshots. For sufficiently large \(L\), the sample covariance
of the sensing waveform can be approximated as
\begin{equation}
\frac{1}{L}
\mathbf X_{\rm s}\mathbf X_{\rm s}^{H}
\approx
\mathbf R_{\rm s}(\mathbf W_{\rm s}).
\label{eq:sample_covariance}
\end{equation}

For downlink communication, let
$\mathbf d^{\rm DL}(n)
=
[d_1^{\rm DL}(n),\ldots,d_{K_c}^{\rm DL}(n)]^T
\in\mathbb C^{K_c\times 1}$
denote the downlink data-symbol vector intended for the $K_c$ communication UAVs, where $\mathbb E\{\mathbf d^{\rm DL}(n)(\mathbf d^{\rm DL}(n))^H\}=\mathbf I_{K_c}$. The BS applies linear downlink transmit beamforming at the communication transceiver antennas. Let
$\mathbf W_c^{\rm DL}
=
[\mathbf w_{c,1}^{\rm DL},\ldots,\mathbf w_{c,K_c}^{\rm DL}]
\in\mathbb C^{M_c\times K_c}$
denote the downlink communication precoding matrix, where $\mathbf w_{c,k}^{\rm DL}\in\mathbb C^{M_c\times 1}$ is the transmit beamforming vector for the $k$-th communication UAV. The downlink communication transmit signal is given by
\begin{equation}
\mathbf x_c^{\rm DL}(n)
=
\mathbf W_c^{\rm DL}\mathbf d^{\rm DL}(n).
\end{equation}
% Accordingly, the downlink communication transmit covariance is
% \begin{equation}
% \mathbf R_c^{\rm DL}
% \triangleq
% \mathbb E\{\mathbf x_c^{\rm DL}(n)(\mathbf x_c^{\rm DL}(n))^H\}
% =
% \mathbf W_c^{\rm DL}(\mathbf W_c^{\rm DL})^H.
% \end{equation}

For uplink communication, let
$\mathbf d^{\rm UL}(n)
=
[d_1^{\rm UL}(n),\ldots,d_{K_c}^{\rm UL}(n)]^T
\in\mathbb C^{K_c\times 1}$
denote the uplink transmit-symbol vector from the same set of communication UAVs. The uplink transmit-power covariance is given by
$\mathbf P^{\rm UL}
=
\mathbb E\{\mathbf d^{\rm UL}(n)(\mathbf d^{\rm UL}(n))^H\}
=
\operatorname{diag}(p_1^{\rm UL},\ldots,p_{K_c}^{\rm UL}),$
where $p_k^{\rm UL}$ denotes the given uplink transmit power of the $k$-th communication UAV. To detect the uplink signals, the BS applies linear receive beamforming at the communication transceiver antennas. Let
\begin{equation}
\mathbf W_c^{\rm UL}
=
[\mathbf w_{c,1}^{\rm UL},\ldots,\mathbf w_{c,K_c}^{\rm UL}]
\in\mathbb C^{M_c\times K_c}
\end{equation}
denote the uplink receive beamforming matrix, where $\mathbf w_{c,k}^{\rm UL}\in\mathbb C^{M_c\times 1}$ is the receive beamforming vector for decoding the uplink signal of the $k$-th communication UAV.
The sensing streams, downlink communication symbols, uplink communication symbols, and receiver noises are assumed to be mutually independent over different symbol intervals.

% For uplink communication, let \(d_k(n)\in\mathbb C\) denote
% the transmit signal of the \(k\)-th communication UAV, with
% \(\mathbb E\{|d_k(n)|^2\}=p_k\), where \(p_k\) is the given
% uplink transmit power. Collecting the uplink signals as
% \(\mathbf d(n)=[d_1(n),\ldots,d_{K_{\rm c}}(n)]^T
% \in\mathbb C^{K_{\rm c}\times1}\), we have
% \(\mathbf P\triangleq
% \mathbb E\{\mathbf d(n)\mathbf d^H(n)\}
% ={\rm diag}(p_1,\ldots,p_{K_{\rm c}})\). To detect the
% uplink signals, the BS applies linear receive beamforming at the
% communication receive subarray. Let
% \(\mathbf w_{{\rm c},k}\in\mathbb C^{M_{\rm c}\times1}\)
% denote the receive beamforming vector for decoding the signal
% of the \(k\)-th communication UAV. Collecting all receive
% beamformers, we define
% \[
% \mathbf W_{\rm c}
% =
% [\mathbf w_{{\rm c},1},\ldots,\mathbf w_{{\rm c},K_{\rm c}}]
% \in\mathbb C^{M_{\rm c}\times K_{\rm c}}.
% \]
% The sensing streams, uplink communication symbols, and receiver
% noises are mutually independent over different symbol intervals.

For the sensing link, the BS receives target echoes through the
\(M_2\)-element sensing receive subarray. The UAV sensing targets
are modeled as point targets. For the \(K_s\) targets with angles
\(\{\theta_m\}_{m=1}^{K_s}\) and complex reflection coefficients
\(\{\beta_m\}_{m=1}^{K_s}\), the target echo component is determined
by the sensing transmit and receive steering vectors
\(\mathbf a_t(\theta)\in\mathbb C^{M_1\times1}\) and
\(\mathbf a_r(\theta)\in\mathbb C^{M_2\times1}\), respectively.

Due to the simultaneous operation of the sensing transmit and
receive subarrays, the direct leakage from the sensing transmit
antennas to the sensing receive antennas introduces SI. In practical
FD-ISAC receivers, SIC is usually implemented in a cascaded manner,
where analog/RF-domain SIC is first performed before the ADC and
digital-domain SIC is further applied after sampling by exploiting the
known transmitted signal~\cite{11050889,7421941}. However, due to imperfect SI
channel estimation, hardware impairments, nonlinear distortions, and
multipath leakage, the SI cannot be completely removed in practice.
Following the residual SI modeling adopted in FD systems, the
post-SIC residual SI at the sensing receiver is represented by
\(\mathbf H_{{\rm SI},s}\mathbf x_s(n)\), where
\(\mathbf H_{{\rm SI},s}\in\mathbb C^{M_2\times M_1}\) denotes the
equivalent residual SI channel from the sensing transmit subarray to
the sensing receive subarray. 
The received signal at the sensing receive subarray in symbol interval $n$ is given by
\begin{equation}
\small
\mathbf y_s(n)
=
\sum_{m=1}^{K_s}
\beta_m
\mathbf a_r^*(\theta_m)
\mathbf a_t^H(\theta_m)
\mathbf x_s(n)
+
\boldsymbol{\ell}_s(n)
+
\mathbf H_{{\rm SI},s}\mathbf x_s(n)
+
\mathbf z_s(n),
\end{equation}
where $\mathbf y_s(n)\in\mathbb C^{M_2\times 1}$ denotes the received sensing signal, and $\mathbf z_s(n)\sim\mathcal{CN}(\mathbf 0,\sigma_s^2\mathbf I_{M_2})$ is the sensing receiver noise.
The aggregate leakage term $\boldsymbol{\ell}_s(n)$ captures the effective residual interference from uplink and downlink communication signals. For CRB evaluation, we define its covariance as
$\mathbf R_{\ell,s}\triangleq\mathbb E\{\boldsymbol{\ell}_s(n)\boldsymbol{\ell}_s^H(n)\}.$

For the downlink communication link, let
$\mathbf h_{c,k}^{\rm DL}\in\mathbb C^{M_c\times 1}$
denote the channel from the communication transceiver antennas
to the $k$-th communication UAV, and let
$\mathbf h_{s,k}^{\rm DL}\in\mathbb C^{M_1\times 1}$
denote the interference channel from the sensing transmit
subarray to the $k$-th communication UAV. In the downlink
slot, the received signal at the $k$-th communication UAV is
given by
\begin{equation}
y_k^{\rm DL}(n)
=
(\mathbf h_{c,k}^{\rm DL})^H\mathbf x_c^{\rm DL}(n)
+
(\mathbf h_{s,k}^{\rm DL})^H\mathbf x_s(n)
+
z_k^{\rm DL}(n),
\end{equation}
where $z_k^{\rm DL}(n)\sim\mathcal{CN}(0,\sigma_{\rm DL}^2)$
denotes the downlink receiver noise.

For the uplink communication link, let
$\mathbf H_c^{\rm UL}
=
[\mathbf h_{c,1}^{\rm UL},\ldots,\mathbf h_{c,K_c}^{\rm UL}]
\in\mathbb C^{M_c\times K_c}$
denote the uplink channel matrix from the communication UAVs
to the communication transceiver antennas. In the uplink slot, the communication transceiver antennas receive uplink signals from the $K_c$ communication UAVs, while the concurrent FD sensing transmission may cause residual SI after SIC. Therefore,
the received uplink communication signal is given by
\begin{equation}
\mathbf y_c^{\rm UL}(n)
=
\mathbf H_c^{\rm UL}\mathbf d^{\rm UL}(n)
+
\mathbf H_{{\rm SI},c}\mathbf x_s(n)
+
\mathbf z_c^{\rm UL}(n),
\end{equation}
where $\mathbf y_c^{\rm UL}(n)\in\mathbb C^{M_c\times 1}$
denotes the received uplink communication signal, and
$\mathbf z_c^{\rm UL}(n)\sim\mathcal{CN}(\mathbf 0,\sigma_{\rm UL}^2\mathbf I_{M_c})$
is the additive noise at the communication receiver.

Based on the uplink receive beamforming model, the detected
signal of the $k$-th communication UAV is given by
$\hat d_k^{\rm UL}(n)
=
(\mathbf w_{c,k}^{\rm UL})^H\mathbf y_c^{\rm UL}(n).$

{\color{black}
\subsection{System Performance}
\noindent\textit{1) Communication Rate:}
Based on the downlink received signal in Section~II-C, the
SINR of the $k$-th communication UAV in the downlink slot is
given by
\begin{equation}
\gamma_k^{\rm DL}
=
\frac{
\left|(\mathbf h_{c,k}^{\rm DL})^H\mathbf w_{c,k}^{\rm DL}\right|^2
}{
\sum_{j\ne k}^{K_c}
\left|(\mathbf h_{c,k}^{\rm DL})^H\mathbf w_{c,j}^{\rm DL}\right|^2
+
(\mathbf h_{s,k}^{\rm DL})^H\mathbf R_s\mathbf h_{s,k}^{\rm DL}
+
\sigma_{\rm DL}^2
}.
\end{equation}
Accordingly, the downlink sum rate is expressed as
\begin{equation}
R_{\rm DL}
=
B\sum_{k=1}^{K_c}
\log_2(1+\gamma_k^{\rm DL}).
\end{equation}

Based on the uplink received signal and the receive beamforming
model in Section~II-C, the SINR of the $k$-th communication
UAV in the uplink slot is given by
\begin{equation}
\gamma_k^{\rm UL}
=
\frac{
p_k^{\rm UL}
\left|(\mathbf w_{c,k}^{\rm UL})^H\mathbf h_{c,k}^{\rm UL}\right|^2
}{
\begin{aligned}
&\sum_{j\ne k}^{K_c}
p_j^{\rm UL}
\left|(\mathbf w_{c,k}^{\rm UL})^H\mathbf h_{c,j}^{\rm UL}\right|^2 \\
&\quad+
(\mathbf w_{c,k}^{\rm UL})^H
\mathbf H_{{\rm SI},c}\mathbf R_s\mathbf H_{{\rm SI},c}^H
\mathbf w_{c,k}^{\rm UL}
+
\sigma_{\rm UL}^2\|\mathbf w_{c,k}^{\rm UL}\|^2
\end{aligned}
}.
\end{equation}
The corresponding uplink sum rate is
\begin{equation}
R_{\rm UL}
=
B\sum_{k=1}^{K_c}
\log_2(1+\gamma_k^{\rm UL}).
\end{equation}

Thus, the weighted communication sum rate is
defined as
$R_{\rm sum}
=
\alpha_{\rm DL}R_{\rm DL}
+
\alpha_{\rm UL}R_{\rm UL},$
where $\alpha_{\rm DL}\ge 0$ and $\alpha_{\rm UL}\ge 0$ are fixed
weights for balancing the downlink and uplink rates, respectively.

\noindent\textit{2) CRB-Based Sensing Performance:}
The sensing performance is characterized by the angle-related
Cramér--Rao bound (CRB), which provides a lower bound on the
estimation error covariance of the target angles. For CRB evaluation, we collect the sensing received signals over
$L$ snapshots as
$\mathbf Y_s=[\mathbf y_s(1),\ldots,\mathbf y_s(L)]
\in\mathbb C^{M_2\times L}$. Let
$\mathbf L_s=[\boldsymbol{\ell}_s(1),\ldots,\boldsymbol{\ell}_s(L)]$
and
$\mathbf Z_s=[\mathbf z_s(1),\ldots,\mathbf z_s(L)]$.
Define
$\mathbf A_t(\boldsymbol\theta)
=
[\mathbf a_t(\theta_1),\ldots,\mathbf a_t(\theta_{K_s})]$,
$\mathbf A_r(\boldsymbol\theta)
=
[\mathbf a_r(\theta_1),\ldots,\mathbf a_r(\theta_{K_s})]$,
and
$\mathbf B(\boldsymbol\beta)=\operatorname{diag}(\boldsymbol\beta)$,
where $\boldsymbol\beta=[\beta_1,\ldots,\beta_{K_s}]^T$.
Then, the stacked sensing received signal is written as
\begin{equation}
\mathbf Y_s
=
\mathbf A_r^*(\boldsymbol\theta)
\mathbf B(\boldsymbol\beta)
\mathbf A_t^H(\boldsymbol\theta)
\mathbf X_s
+
\mathbf L_s
+
\mathbf H_{{\rm SI},s}\mathbf X_s
+
\mathbf Z_s .
\end{equation}
By vectorizing $\mathbf Y_s$, we obtain
\begin{equation}
\mathbf y_s
\triangleq
\operatorname{vec}(\mathbf Y_s)
=
(\mathbf X_s^T\otimes \mathbf I_{M_2})
\sum_{m=1}^{K_s}
\beta_m\boldsymbol\phi(\theta_m)
+
\mathbf v_s,
\end{equation}
where
$\boldsymbol\phi(\theta)
=
\mathbf a_t^*(\theta)\otimes \mathbf a_r^*(\theta)
\in\mathbb C^{M_1M_2\times 1}$
denotes the CPA-induced virtual array manifold, and
$\mathbf v_s
=
\operatorname{vec}
(
\mathbf L_s
+
\mathbf H_{{\rm SI},s}\mathbf X_s
+
\mathbf Z_s
)$
collects the aggregate communication leakage, the residual SI
after SIC, and the sensing receiver noise. At each snapshot, the
corresponding non-target component is
\begin{equation}
\mathbf v_s(n)
=
\boldsymbol{\ell}_s(n)
+
\mathbf H_{{\rm SI},s}\mathbf x_s(n)
+
\mathbf z_s(n).
\end{equation}
Under the independence assumptions in the signal model, the
covariance matrix of $\mathbf v_s(n)$ is given by
\begin{equation}
\mathbf R_{v,s}
\triangleq
\mathbb E\{\mathbf v_s(n)\mathbf v_s^H(n)\}
=
\mathbf R_{\ell,s}
+
\mathbf H_{{\rm SI},s}\mathbf R_s\mathbf H_{{\rm SI},s}^H
+
\sigma_s^2\mathbf I_{M_2}.
\end{equation}

% The sensing performance is characterized by the angle-related
% Cram\'er--Rao bound (CRB), which provides a lower bound on
% the estimation error covariance of the target angles. Based on
% the sensing received signal model, the non-target components
% at the sensing receive subarray consist of the uplink communication
% leakage, the residual SI after SIC, and the receiver noise. Thus,
% the effective interference-plus-noise term can be written as
% \begin{equation}
% \mathbf v_{\rm s}(n)
% =
% \sqrt{\alpha}\mathbf H_{\rm s}\mathbf d(n)
% +
% \mathbf H_{{\rm SI},{\rm s}}\mathbf x_{\rm s}(n)
% +
% \mathbf z_{\rm s}(n).
% \label{eq:effective_noise_sensing}
% \end{equation}

% Under the independence assumptions in the signal model, the covariance
% matrix of \(\mathbf v_{\rm s}(n)\) is given by
% \begin{equation}
% \begin{aligned}
% \mathbf R_{{\rm v},{\rm s}}
% &=
% \mathbb E\{\mathbf v_{\rm s}(n)\mathbf v_{\rm s}^{H}(n)\}
% \\
% &=
% \alpha\mathbf H_{\rm s}\mathbf P\mathbf H_{\rm s}^{H}
% +
% \mathbf H_{{\rm SI},{\rm s}}
% \mathbf W_{\rm s}\mathbf W_{\rm s}^{H}
% \mathbf H_{{\rm SI},{\rm s}}^{H}
% +
% \sigma_{\rm s}^{2}\mathbf I_{M_2}.
% \end{aligned}
% \label{eq:sensing_interference_covariance}
% \end{equation}
% Here, the first term represents the covariance of the uplink
% communication leakage, the second term represents the covariance of
% the residual SI induced by the sensing precoder after SIC, and the
% third term represents the sensing receiver noise covariance.

Let the unknown target parameter vector be defined as
$\boldsymbol\xi
=
[
\boldsymbol\theta^T,
\operatorname{Re}\{\boldsymbol\beta\}^T,
\operatorname{Im}\{\boldsymbol\beta\}^T
]^T
\in\mathbb R^{3K_s\times 1},$
where \(\boldsymbol\theta=[\theta_1,\ldots,\theta_{K_s}]^T\) denotes
the target-angle vector and
\(\boldsymbol\beta=[\beta_1,\ldots,\beta_{K_s}]^T\) denotes the
complex reflection-coefficient vector. The mean vector of the received
sensing signal is expressed as
$\boldsymbol\mu(\boldsymbol\xi)
=
\operatorname{vec}
\left(
\mathbf A_{\rm r}^{*}(\boldsymbol\theta)
\mathbf B(\boldsymbol\beta)
\mathbf A_{\rm t}^{H}(\boldsymbol\theta)
\mathbf X_{\rm s}
\right).$

Due to the effective interference-plus-noise covariance is generally
colored, the FIM of \(\boldsymbol\xi\) is given by
\begin{equation}
[\mathbf F_{\boldsymbol\xi}]_{i,j}
=
2\operatorname{Re}
\left\{
\frac{\partial \boldsymbol\mu^H(\boldsymbol\xi)}
{\partial \xi_i}
\left(
\mathbf I_L\otimes
\mathbf R_{{\rm v},{\rm s}}^{-1}
\right)
\frac{\partial \boldsymbol\mu(\boldsymbol\xi)}
{\partial \xi_j}
\right\},
\label{eq:full_fim}
\end{equation}
where \(\mathbf I_L\otimes\mathbf R_{{\rm v},{\rm s}}\) denotes the
covariance matrix of the stacked effective interference-plus-noise vector
over \(L\) snapshots.

Since the reflection coefficients are nuisance parameters for the
considered angle-estimation task, we define
$\boldsymbol\eta
=
[
\operatorname{Re}\{\boldsymbol\beta\}^T,
\operatorname{Im}\{\boldsymbol\beta\}^T
]^T
\in\mathbb R^{2K_s\times 1}.$
Accordingly, the full FIM can be partitioned as
\begin{equation}
\mathbf F_{\boldsymbol\xi}
=
\begin{bmatrix}
\mathbf F_{\boldsymbol\theta\boldsymbol\theta}
&
\mathbf F_{\boldsymbol\theta\boldsymbol\eta}
\\
\mathbf F_{\boldsymbol\eta\boldsymbol\theta}
&
\mathbf F_{\boldsymbol\eta\boldsymbol\eta}
\end{bmatrix}.
\label{eq:fim_partition}
\end{equation}
After eliminating the nuisance reflection parameters via the Schur
complement, the equivalent FIM for the target angles is given by
$\mathbf J_{\boldsymbol\theta}
=
\mathbf F_{\boldsymbol\theta\boldsymbol\theta}
-
\mathbf F_{\boldsymbol\theta\boldsymbol\eta}
\mathbf F_{\boldsymbol\eta\boldsymbol\eta}^{-1}
\mathbf F_{\boldsymbol\eta\boldsymbol\theta}.$
Therefore, the angle-related CRB metric is defined as
\begin{equation}
{\rm CRB}_{\boldsymbol\theta}
=
\operatorname{tr}
\left(
\mathbf J_{\boldsymbol\theta}^{-1}
\right).
\label{eq:angle_crb_metric}
\end{equation}
}

\section{Sensing Performance Analysis and Problem Formulation}
\label{sec3}

In this section, we first provide an order-wise CRB analysis to quantify the sensing gain enabled by the co-prime array layout. Subsequently, we formulate a joint resource allocation problem for the considered TDD communication and FD sensing ISAC system, where the weighted downlink--uplink communication sum rate is maximized subject to the sensing accuracy, total BS transmit-power, downlink/uplink QoS, and residual SI constraints.

\subsection{Order-Wise CRB Analysis}

To obtain analytical insight into the sensing advantage of the proposed CPA-based architecture, we first characterize the  normalized correlation-response of the CPA-induced virtual array manifold. We then investigate the asymptotic  behavior (i.e., scaling order)  of the angle-related CRB with respect to the allocated antennas by comparing it with that of a conventional ULA benchmark. Similar normalized steering-vector correlation metrics have been widely adopted as ambiguity functions or virtual-array beampatterns in MIMO radar array characterization, where they quantify whether two angular directions can be distinguished from the array responses~\cite{7928056,10538290}.

For a reference direction $\theta_0$ and a scanning direction $\theta$, the normalized correlation response of the virtual array manifold is defined as
\begin{equation}
\chi(\theta_0,\theta)
=
\frac{
\left|
\boldsymbol{\phi}^{H}(\theta_0)
\boldsymbol{\phi}(\theta)
\right|
}{
\left\|
\boldsymbol{\phi}(\theta_0)
\right\|_2
\left\|
\boldsymbol{\phi}(\theta)
\right\|_2
}.
\label{eq:norm_corr}
\end{equation}
 The corresponding response is given by
$
20\log_{10}
\left(
\chi(\theta_0,\theta)
\right).$

\begin{figure}[t]
    \centering
    \includegraphics[width=0.92\columnwidth]{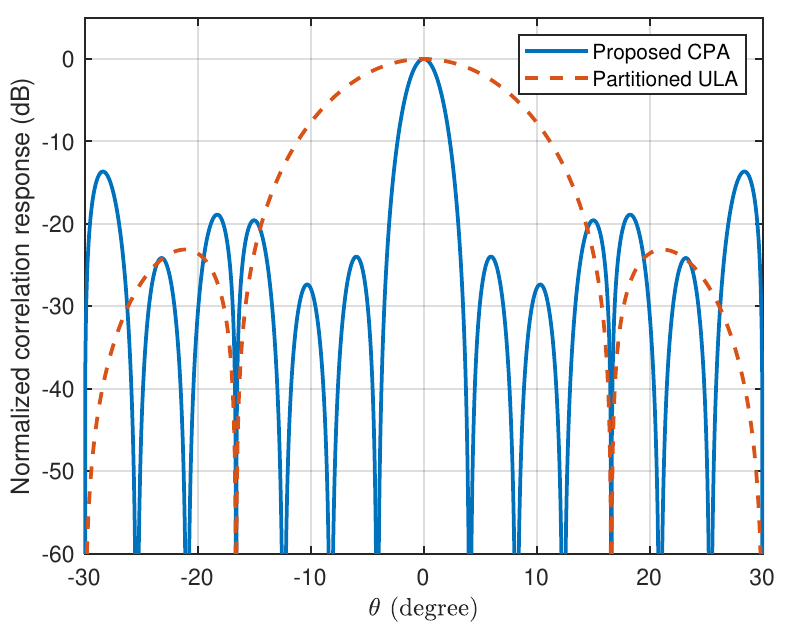}
    \caption{
Normalized correlation responses of the proposed CPA and the partitioned ULA benchmark with 4 sensing transmit antenna elements and 7 sensing receive antenna elements, i.e., $(M_1, M_2)=(4,7)$, for reference direction $\theta_0=0^\circ$.}
    \label{fig:norm_corr_response}
\end{figure}

As shown in Fig. 2, for the same numbers of sensing
transmit and receive antennas, the proposed CPA exhibits
a much narrower mainlobe than the partitioned ULA. This
narrow mainlobe from the CPA is due to the difference
between sensing and communications. This behavior is fundamentally due to the transmit-receive virtual aperture created by the CPA geometry.  In conventional communication-type beamforming ,  antenna array is mainly designed to improve the spatial degrees of
freedom for one-way signal transmission from the BS to users
or from users to the BS. Accordingly, half-wavelength-spaced ULAs
are commonly adopted to form directing  beams while avoiding
spatial aliasing. In contrast, sensing depends on the phase variations accumulated along both the probing path from the transmitter to the target and the echo path from the target to the receiver. Hence, the relevant array response is not determined by the transmit beam directing.
Therefore, the virtual array responses
decorrelate more rapidly as the scanning direction deviates
from the reference direction, which indicates a higher angular
resolution capability.
% As shown in Fig.~\ref{fig:norm_corr_response}, for the same numbers of sensing
% transmit and receive antennas, the proposed CPA exhibits
% a much narrower mainlobe than the partitioned ULA. This
% narrow mainlobe is due to the fundamental difference between
% communication and sensing array responses. For communication,
% the antenna array is mainly designed to improve the spatial
% degrees of freedom for one-way signal transmission from the
% BS to the UEs or from the UEs to the BS, and a
% half-wavelength-spaced ULA is commonly adopted to form
% narrow directional beams while avoiding spatial aliasing.
% However, sensing includes both the phase accumulation of
% the transmitted probing signal toward the target and that of
% the reflected echo received from the target. Hence, the sensing
% angular response is governed by the transmit--receive virtual
% array manifold. The co-prime transmit and receive
% spacings enlarge the effective virtual aperture and make the
% virtual array response decorrelate more rapidly with angular
% deviation, thereby producing the narrower mainlobe. 

Motivated by the narrower correlation mainlobe observed above, we next investigate the order-wise behavior of the angle-related CRB and compare it with that of a ULA benchmark. Specifically, we begin with the single-target case, where the asymptotic scaling can be directly characterized from the derivative structure of the CPA virtual manifold. We then extend the analysis to the multi-target case. This extension is nontrivial, since, unlike the single-target setting, the multi-target CRB depends on the effective angle Fisher information matrix after eliminating the nuisance parameters associated with the target reflection coefficients. Therefore, the multi-target analysis must retain the complete FIM block structure and its Schur-complement correction, under which a nondegenerate target configuration enables a meaningful order-wise comparison between the CPA and the ULA.

To facilitate the following asymptotic analysis, we introduce the spatial frequency $\omega(\theta)=\pi\sin\theta$ and reformulate the angle-related CRB with respect to $\omega$, since $\omega$ leads to a more compact characterization of the steering vectors and their derivatives. Further, we eliminate the nuisance parameters from the full mean-based FIM. For the single-target case, define the parameter vector as
\begin{equation}
\label{eq:vector}
\boldsymbol{\xi}
=
\begin{bmatrix}
\omega \\
\boldsymbol{\eta}
\end{bmatrix},
\qquad
\boldsymbol{\eta}
=
\begin{bmatrix}
\Re\{\beta\} \\
\Im\{\beta\}
\end{bmatrix},
\end{equation}
where $\omega$ is the spatial frequency of interest and $\boldsymbol{\eta}$ collects the nuisance parameters associated with the complex reflection coefficient. Then, the corresponding FIM can be partitioned as
\begin{equation}
\mathbf{F}
=
\begin{bmatrix}
F_{\omega\omega} & \mathbf{F}_{\omega\eta} \\
\mathbf{F}_{\eta\omega} & \mathbf{F}_{\eta\eta}
\end{bmatrix}.
\end{equation}
Since the CRB of the spectral frequency $\omega$ is given by the $(1,1)$-th block of $\mathbf{F}^{-1}$, applying the block matrix inversion formula yields
\begin{equation}
\mathrm{CRB}_{\omega}
=
\left(
F_{\omega\omega}
-
\mathbf{F}_{\omega\eta}
\mathbf{F}_{\eta\eta}^{-1}
\mathbf{F}_{\eta\omega}
\right)^{-1}.
\label{eq:effective_CRB_single}
\end{equation}
The term
$F_{\omega\omega}
-
\mathbf{F}_{\omega\eta}
\mathbf{F}_{\eta\eta}^{-1}
\mathbf{F}_{\eta\omega}$
represents the effective Fisher information associated with the angle parameter after eliminating the nuisance parameters via the Schur complement. Therefore, the order-wise CRB analysis can be equivalently carried out by studying the asymptotic scaling of $F_{\omega\omega}
-
\mathbf{F}_{\omega\eta}
\mathbf{F}_{\eta\eta}^{-1}
\mathbf{F}_{\eta\omega}$. As for the multi-target case, the same principle applies, except that the scalar quantity in \eqref{eq:effective_CRB_single} is replaced by the effective angle FIM matrix obtained from the corresponding block Schur complement.

\begin{theorem}[Single-target order-wise CRB comparison]
Consider the single-target sensing case. Let the CPA sensing layout
be generated by the co-prime pair \((M_1,M_2)\) satisfying
\(M_s=M_1+M_2-1\), \(M_1=\rho(M_s+1)\), and
\(M_2=(1-\rho)(M_s+1)\), where \(\rho\in(0,1)\) is chosen such that
\(M_1\) and \(M_2\) are co-prime positive integers. The transmit power of
each sensing transmit antenna is kept fixed as \(M_s\) grows. Then,
as \(M_s\to\infty\), the CRB of the spatial frequency \(\omega\)
for the proposed CPA-based sensing architecture satisfies
\begin{equation}
\mathrm{CRB}_{\omega,\mathrm{CPA}}=O(M_s^{-6}).
\end{equation}
For the partitioned ULA sensing benchmark with the same number of
sensing elements, the corresponding CRB satisfies
\begin{equation}
    \mathrm{CRB}_{\omega,\mathrm{ULA}}=O(M_s^{-4}).
\end{equation}
\end{theorem}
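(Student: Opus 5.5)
We reduce the single-target effective Fisher information to a weighted spread of the virtual-array positions around their centroid, and then compare how that spread scales for the CPA and the partitioned ULA. With $K_s=1$, the mean is $\boldsymbol\mu=\beta(\mathbf X_s^T\otimes\mathbf I_{M_2})\boldsymbol\phi(\omega)$, and the derivatives are
\begin{equation*}
\partial\boldsymbol\mu/\partial\omega=\beta\mathbf G\dot{\boldsymbol\phi}, \qquad
\partial\boldsymbol\mu/\partial\Re\beta=\mathbf G\boldsymbol\phi, \qquad
\partial\boldsymbol\mu/\partial\Im\beta=j\mathbf G\boldsymbol\phi,
\end{equation*}
where $\mathbf G=\mathbf X_s^T\otimes\mathbf I_{M_2}$. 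Substituting these into \eqref{eq:full_fim} and taking the Schur complement in \eqref{eq:effective_CRB_single} gives the standard form
\begin{equation*}
\mathrm{CRB}_\omega^{-1}=2|\beta|^2\Big(\dot{\boldsymbol\phi}^H\mathbf Q\dot{\boldsymbol\phi}-\frac{|\boldsymbol\phi^H\mathbf Q\dot{\boldsymbol\phi}|^2}{\boldsymbol\phi^H\mathbf Q\boldsymbol\phi}\Big),
\end{equation*}
with $\mathbf Q=\mathbf G^H(\mathbf I_L\otimes\mathbf R_{v,s}^{-1})\mathbf G\approx L\,\mathbf R_s^T\otimes\mathbf R_{v,s}^{-1}$ by \eqref{eq:sample_covariance}.

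For the order-wise analysis we take a representative isotropic design. We set $\mathbf R_s=P_0\mathbf I_{M_1}$, which has fixed per-antenna power, and treat $\mathbf R_{v,s}$ as having eigenvalues bounded above and below uniformly in $M_s$. Under these assumptions $\mathbf Q\asymp c\,\mathbf I$ up to constants, and the sandwich $\lambda_{\min}(\mathbf Q)\le\cdot\le\lambda_{\max}(\mathbf Q)$ yields two-sided bounds of the same order. The bracket is then $c$ times the quantity obtained with $\mathbf Q=\mathbf I$.

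With $\mathbf Q=\mathbf I$ the computation becomes explicit. Each entry of $\boldsymbol\phi$ is $e^{j\omega p}$ up to a fixed sign convention, with $p=mM_2+nM_1$ ranging over the virtual positions in \eqref{eq:cpa_virtual_array} in units of $d$. Hence $\dot{\boldsymbol\phi}=j\,\mathrm{diag}(p)\boldsymbol\phi$, and the bracket equals $\sum_p p^2-(\sum_p p)^2/|\mathcal V|$. This is $|\mathcal V|$ times the empirical variance of the virtual positions, that is, $M_1M_2\,\mathrm{Var}(p)$. So the whole proof reduces to this variance computation.

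Since $m$ and $n$ range independently, $\mathrm{Var}(p)=M_2^2\mathrm{Var}(m)+M_1^2\mathrm{Var}(n)$. With $\mathrm{Var}(m)=(M_1^2-1)/12$ and $\mathrm{Var}(n)=(M_2^2-1)/12$, this gives $\mathrm{Var}(p)=(2M_1^2M_2^2-M_1^2-M_2^2)/12=\Theta(M_s^4)$, because $M_1,M_2=\Theta(M_s)$ for fixed $\rho\in(0,1)$. Multiplying by $M_1M_2=\Theta(M_s^2)$ shows that the information is $\Theta(M_s^6)$, so $\mathrm{CRB}_{\omega,\mathrm{CPA}}=O(M_s^{-6})$.

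For the partitioned ULA benchmark we use adjacent half-wavelength blocks: transmit elements at $0,\dots,M_1-1$ and receive elements at $M_1,\dots,M_1+M_2-1$. The virtual positions are then $m+n+M_1$, whose variance is $\mathrm{Var}(m)+\mathrm{Var}(n)=\Theta(M_s^2)$. Multiplying by the $M_1M_2=\Theta(M_s^2)$ pairs gives $\Theta(M_s^4)$ information, hence $\mathrm{CRB}_{\omega,\mathrm{ULA}}=O(M_s^{-4})$. The centroid term in the bracket is exactly the mean-subtraction that the Schur complement induces. This confirms that the nuisance-parameter correction does not reduce the order of the information, since the positions are not all equal.

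The main obstacle is justifying $\mathbf Q\asymp\mathbf I$ when $\mathbf R_{v,s}$ contains the residual-SI term $\mathbf H_{{\rm SI},s}\mathbf R_s\mathbf H_{{\rm SI},s}^H$ and the leakage term, both of which could grow with $M_s$. I would handle this by stating as an explicit assumption that the spectral norm of $\mathbf R_{v,s}$ stays bounded, for example that SIC keeps $\|\mathbf H_{{\rm SI},s}\|^2M_1P_0=O(1)$. An alternative is to note that only an upper bound on the CRB is needed, which requires only $\lambda_{\max}(\mathbf R_{v,s})=O(1)$, together with $\lambda_{\min}(\mathbf R_s)\ge P_0$. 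For a general $\mathbf R_s$ one can use $\mathbf Q\succeq L\lambda_{\min}(\mathbf R_s)\lambda_{\max}(\mathbf R_{v,s})^{-1}\mathbf I$, because the Schur-complement expression is monotone in $\mathbf Q$ in the sense that it is the minimum over $\alpha$ of $(\dot{\boldsymbol\phi}-\alpha\boldsymbol\phi)^H\mathbf Q(\dot{\boldsymbol\phi}-\alpha\boldsymbol\phi)$. This variational characterization is the key tool: it turns the lower bound on $\mathbf Q$ into a lower bound on the information, and hence into the stated $O(\cdot)$ upper bounds on the CRB.
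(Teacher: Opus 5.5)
Your proposal is correct and follows essentially the same route as the paper's proof. Both use the single-target Schur complement, the variational characterization $\min_{c}(\dot{\boldsymbol\phi}-c\boldsymbol\phi)^H\mathbf Q(\dot{\boldsymbol\phi}-c\boldsymbol\phi)$ to sandwich the weighted quantity between eigenvalue-scaled unweighted ones, and the closed forms $\frac{M_1M_2}{12}(2M_1^2M_2^2-M_1^2-M_2^2)$ (CPA) and $\frac{M_1M_2}{12}(M_1^2+M_2^2-2)$ (ULA). Your variance-of-virtual-positions computation is simply a tidier derivation of the paper's three inner products, and your explicit assumption $\lambda_{\max}(\mathbf R_{v,s})=O(1)$ makes precise what the paper assumes implicitly when it replaces the colored covariance with $\sigma_r^2\mathbf I$.
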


\begin{proof}
Please refer to Appendix A.
\end{proof}

% \begin{theorem}
% \label{thm:CPA_dominant_term}

% As $M_s$ increases for a fixed sensing power budget, the angle estimation exhibits an order-wise decay of $O(M_s^{-6})$ with the proposed CPA-based sensing architecture, whereas it decays as $O(M_s^{-4})$ for the partitioned ULA sensing baseline. For CPA sensing, we have
% \begin{equation} 
% \begin{aligned}
% \dot{\boldsymbol{\phi}}^{H}(\omega)\dot{\boldsymbol{\phi}}(\omega)
% &\approx
% \frac{7}{6}\rho^{3}(1-\rho)^{3}(M_s+1)^{6}.
% \end{aligned}
% \end{equation}
% \end{theorem}

% \begin{proof}
% See Appendix~\ref{app:proof_thm1}.
% \end{proof}

\begin{figure}[t]
    \centering
    \includegraphics[width=0.92\columnwidth]{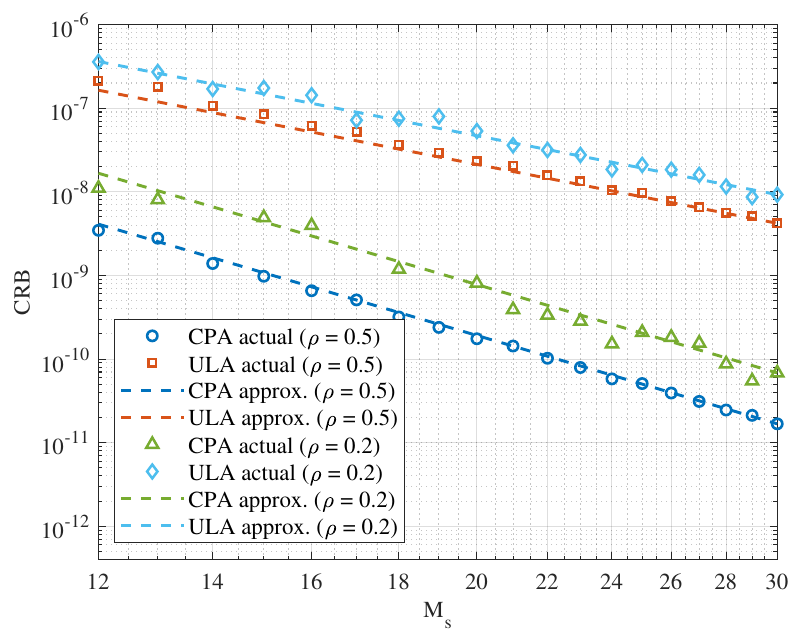}
    \caption{CRB versus the number of sensing elements $M_s$. The dashed lines indicate the asymptotic reference slopes predicted by Theorem~1, namely, CPA $\propto M_s^{-6}$ and ULA $\propto M_s^{-4}$.}
    \label{fig:crb_vs_ms_single_target}
\end{figure}

% Theorem 1 illustrates that under the same number of physical sensing elements and a fixed sensing power budget, the proposed CPA-based sensing architecture enjoys a faster asymptotic CRB decay than the partitioned ULA benchmark.

Theorem 1 illustrates that, under the same number of physical
sensing elements and with the transmit power of each sensing
transmit antenna kept fixed, the proposed CPA-based sensing
architecture enjoys a faster asymptotic CRB decay than the
partitioned ULA benchmark. This result reflects the virtual-aperture gain of the
CPA geometry in the order-wise array-geometry comparison.

% \begin{remark}
% % Theorem~\ref{thm:CPA_dominant_term} quantifies the leading-order scaling
% % $\dot{\boldsymbol{\phi}}^{H}(\theta)\dot{\boldsymbol{\phi}}(\theta)$, and hence that of $F_{\theta\theta}(m,m)$ under \eqref{eq:jiabi}.
% % For the same number of sensing antenna elements, the $O(M_s^{2})$ order gain explains the CRB advantage of the CPA-induced over conventional ULA as the increase of $M_s$.
% \end{remark}

On the other hand, to characterize the asymptotic CRB behavior in the multi-target case, we first introduce the following nondegenerate target placement.

\begin{definition}
Consider the multi-target sensing case with target spatial-frequency vector
$\boldsymbol{\omega} = [\omega_1,\ldots,\omega_{K_{\mathrm{s}}}]^T$, where $\omega_m = \pi \sin \theta_m$.
The target placement is said to be nondegenerate if there exists a constant
$\delta \in (0,1)$, independent of $M_{\mathrm{s}}$, such that for any $m \neq n$,
$\left|\sin\!\left(\frac{M_1(\omega_m-\omega_n)}{2}\right)\right| \ge \delta$
and
$\left|\sin\!\left(\frac{M_2(\omega_m-\omega_n)}{2}\right)\right| \ge \delta$.
\end{definition}

This condition excludes degenerate target configurations where
the pairwise spatial-frequency differences fall close to the
aliasing singularities induced by the co-prime spacings. Under Definition 1, the inter-target cross terms are asymptotically weaker than the corresponding diagonal terms after the Schur-complement reduction.

\begin{theorem}
% Under the same CPA sensing configuration as in Theorem 1, consider the multi-target sensing case with $K_{\mathrm{s}} \ge 2$. If the target placement satisfies Definition 1, then, under a fixed sensing power budget, the trace of the multi-target spatial-frequency CRB obtained from the full Fisher information matrix after eliminating the nuisance reflection parameters via the block Schur complement satisfies
% \begin{equation}
% \mathrm{tr}\!\left(\mathbf{CRB}_{\boldsymbol{\omega},\mathrm{CPA}}\right)=O(M_{\mathrm{s}}^{-6})
% \end{equation}
% for sufficiently large $M_{\mathrm{s}}$.

% For the partitioned ULA sensing benchmark with the same number of sensing elements, under the corresponding nondegenerate target placement, the trace of the multi-target spatial-frequency CRB satisfies
% \begin{equation}
% \mathrm{tr}\!\left(\mathbf{CRB}_{\boldsymbol{\omega},\mathrm{ULA}}\right)=O(M_{\mathrm{s}}^{-4})
% \end{equation}
% for sufficiently large $M_{\mathrm{s}}$.

Under the same CPA sensing configuration as in Theorem 1,
consider the multi-target sensing case with fixed \(K_s\ge 2\).
If the target placement satisfies Definition 1, then, with the transmit power of each sensing transmit antenna kept fixed
as in Theorem 1, the trace of the multi-target spatial-frequency CRB
obtained from the full Fisher information matrix after eliminating
the nuisance reflection parameters via the block Schur complement
satisfies
\begin{equation}
    \operatorname{tr}(\mathrm{CRB}_{\omega,\mathrm{CPA}})
=
O(M_s^{-6})
\end{equation}
for sufficiently large \(M_s\).

For the partitioned ULA sensing benchmark with the same number
of sensing elements, under the corresponding nondegenerate target
placement and the same antenna-power setting, the trace of the
multi-target spatial-frequency CRB satisfies
\begin{equation}
\operatorname{tr}(\mathrm{CRB}_{\omega,\mathrm{ULA}})
=
O(M_s^{-4})
\end{equation}
for sufficiently large \(M_s\).
\end{theorem}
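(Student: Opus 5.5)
Our plan is to reduce the multi-target statement to the single-target scaling of Theorem~1 by a perturbation argument on the Schur-complemented angle FIM. We work in the spatial frequencies $\boldsymbol\omega$. For simplicity we take the leakage and residual SI to be fixed and bounded, so that $\mathbf R_{v,s}^{-1}$ is sandwiched between constant multiples of $\mathbf I_{M_2}$. We also replace $\frac1L\mathbf X_s\mathbf X_s^H$ by $\mathbf R_s$, and take $\mathbf R_s = p\,\mathbf I_{M_1}$ with a fixed per-antenna power, which is an orthogonal-probing setting. Under these assumptions every FIM entry is, up to constants, of the form $L\,\mathbf u^H(\mathbf R_s^T\otimes\mathbf R_{v,s}^{-1})\mathbf v$. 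Here $\mathbf u,\mathbf v$ range over $\beta_m\dot{\boldsymbol\phi}(\omega_m)$ and $\boldsymbol\phi(\omega_m)$. Each entry is therefore comparable to a virtual-manifold inner product of the type
\[
S_q(\Delta)=\sum_{m=0}^{M_1-1}\sum_{n=0}^{M_2-1}(mM_2+nM_1)^q e^{\jmath (mM_2+nM_1)\Delta},\qquad q\in\{0,1,2\}.
\]

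The key steps are as follows.

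(i) \emph{Diagonal terms.} Take $\Delta=0$, which is exactly the single-target computation of Theorem~1. This gives $S_2(0)\asymp M_1M_2(M_1M_2)^2\asymp M_s^6$, $S_1(0)\asymp M_s^4$ and $S_0(0)=M_1M_2\asymp M_s^2$. The per-target Schur complement $S_2-|S_1|^2/S_0$ is a centered second moment of the virtual positions, of order $M_s^6$. This reproduces the single-target diagonal block.

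(ii) \emph{Cross terms.} For $\Delta=\omega_m-\omega_n$, $m\neq n$, $S_q(\Delta)$ factorizes, or is a derivative of a factorized expression, as
\[
\Big(\sum_m m^{a}e^{\jmath mM_2\Delta}\Big)\Big(\sum_n n^{b}e^{\jmath nM_1\Delta}\Big),
\]
weighted by $M_2^aM_1^b$. Each geometric-type sum with exponent $a$ is bounded by $C M_1^{a}/|\sin(M_2\Delta/2)|$ via Abel summation. By Definition~1 (taking the indices matched appropriately), these denominators are at least $\delta$. Hence every cross term loses at least one full factor of $M_1$ or $M_2$ relative to the diagonal term of the same type. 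Cross entries of $\mathbf F_{\omega\omega}$ are then $O(M_s^{4})$ against $M_s^6$ on the diagonal, $\mathbf F_{\omega\eta}$ cross blocks are $O(M_s^{3})$ against $M_s^4$, and $\mathbf F_{\eta\eta}$ off-diagonal blocks are $O(M_s)$ against $M_s^2$.

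(iii) \emph{Reduction to the block-diagonal case.} We normalize by $\mathbf D=\operatorname{diag}(M_s^{3}\mathbf I_{K_s}, M_s\mathbf I_{2K_s})$. The normalized FIM $\mathbf D^{-1}\mathbf F\mathbf D^{-1}$ converges, for fixed $K_s$, to a block-diagonal matrix. Its diagonal blocks are the normalized single-target FIMs, which are uniformly positive definite by step (i). Since the Schur complement and the inverse are continuous at this limit, $M_s^{-6}\mathbf J_{\boldsymbol\omega}$ has eigenvalues bounded away from zero. This yields $\operatorname{tr}(\mathbf J_{\boldsymbol\omega}^{-1})=O(M_s^{-6})$. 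The constants depend on $K_s$, $\delta$, $|\beta_m|$, $p$ and the interference bounds, but not on $M_s$.

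(iv) \emph{ULA benchmark.} We repeat the argument with a partitioned ULA. The $M_1$ transmit and $M_2$ receive elements are contiguous and half-wavelength spaced, so the sum-coarray positions lie in a range of length $O(M_s)$. This gives $S_2(0)\asymp M_s^2\cdot M_s^2=M_s^4$ and a centered second moment of order $M_s^4$. The cross terms are controlled by the standard Dirichlet-kernel bound under the corresponding nondegeneracy condition $|\sin(\Delta/2)|\ge\delta$. The same normalization then gives $O(M_s^{-4})$.

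The main obstacle is step (ii) for the $q=1,2$ terms. Taking derivatives produces mixed sums such as $\sum_m m\,e^{\jmath mM_2\Delta}\sum_n e^{\jmath nM_1\Delta}$. In these sums only one factor benefits from the nondegeneracy condition, so one must check carefully that the loss is a full power $M_s^{1}$ relative to the diagonal term in every entry, not just most of them. It must also be enough to make the off-diagonal part of the normalized matrix $o(1)$ after the $\mathbf F_{\omega\eta}\mathbf F_{\eta\eta}^{-1}\mathbf F_{\eta\omega}$ correction. We would handle this with the explicit bound
\[
\Big|\sum_{m=0}^{N-1}m e^{\jmath m x}\Big|\le \frac{N}{|\sin(x/2)|},
\]
obtained by summation by parts, and then track orders entrywise before passing to the limit.
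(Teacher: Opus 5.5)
Your proposal is essentially correct and follows the same route as the paper's Appendix~B: cross inner products of $\boldsymbol\phi(\omega_m)$ and $\dot{\boldsymbol\phi}(\omega_m)$ are reduced to geometric-type sums controlled by Definition~1, the per-target blocks come from the single-target computation of Theorem~1, and diagonal dominance of the Schur-complemented FIM gives the $O(M_s^{-6})$ and $O(M_s^{-4})$ traces; your rescaling by $\mathbf D$ with continuity of the Schur complement is a cleaner justification of the step the paper only asserts, namely that $\mathbf F_{\omega\eta}\mathbf F_{\eta\eta}^{-1}\mathbf F_{\eta\omega}$ does not disturb the orders. One caution: bounding $\mathbf R_{v,s}^{-1}$ between multiples of $\mathbf I_{M_2}$ controls the diagonal quadratic forms but not the bilinear cross entries, so for non-white $\mathbf R_{v,s}$ you must factor each cross entry through the Kronecker structure and rely only on the transmit-side sum $\sum_{m}e^{j mM_2\Delta}$ (valid because $\mathbf R_s=p\mathbf I_{M_1}$), which gives $O(M_s)$, $O(M_s^3)$, $O(M_s^5)$ for the $\eta\eta$, $\omega\eta$, $\omega\omega$ cross entries rather than your $O(M_s^4)$ for the last, yet still leaves every rescaled cross entry at $O(M_s^{-1})$, and that, together with uniform rather than limiting eigenvalue bounds on the per-target blocks, is all step (iii) needs (the paper's remark that a general $\mathbf A=\mathbf R_s^T\otimes\mathbf I$ changes only constants has the mirror-image flaw and is repaired the same way using the receive-side sum).
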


\begin{proof}
Please refer to Appendix B.
\end{proof}

Building on Theorems 1 and 2, we further characterize the snapshot requirement under a fixed physical aperture. Specifically, although the proposed CPA can realize a much larger effective aperture with fewer physical sensing elements, this sparse spatial sampling also leads to an estimation-accuracy loss relative to a same-aperture ULA, which must be compensated by using more temporal snapshots.

\begin{theorem}[Snapshot tradeoff under the same physical sensing aperture]
% Under the same physical aperture, let $L_{\mathrm{CPA}}$ and $L_{\mathrm{ULA}}$ denote the numbers of snapshots required by the proposed CPA sensing architecture and the same-aperture ULA benchmark, respectively. Then, under a fixed sensing power budget, attaining the same spatial-frequency CRB generally requires the proposed CPA to exploit more snapshots than the ULA benchmark.

Under the same physical sensing aperture, let \(L_{\mathrm{CPA}}\)
and \(L_{\mathrm{ULA}}\) denote the numbers of snapshots required by
the proposed CPA sensing architecture and the same-aperture ULA
benchmark, respectively. With the transmit power of each sensing transmit antenna kept fixed
as in the above order-wise analysis, attaining the
same spatial-frequency CRB generally requires the proposed CPA to
exploit more snapshots than the ULA benchmark. 
\end{theorem}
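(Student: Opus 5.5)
We compare the two architectures at a common physical aperture. Fix the CPA generated by $(M_1,M_2)$. Its physical sensing aperture spans the grid positions $\{0,\ldots,\max\{(M_1-1)M_2,(M_2-1)M_1\}\}d$, so the aperture length is $D=\Theta(M_1M_2)$ grid units. The same-aperture ULA benchmark fully populates this aperture, so it uses $N_{\rm ULA}=\Theta(M_1M_2)=\Theta(M_s^2)$ sensing elements. These are partitioned into contiguous transmit and receive subarrays with the same split ratio $\rho$, i.e., $N_t=\rho N_{\rm ULA}$ and $N_r=(1-\rho)N_{\rm ULA}$. Both architectures keep the per-antenna transmit power fixed and use the same noise model.

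The key observation is that the snapshot count enters the FIM in \eqref{eq:full_fim} only through $\mathbf X_s^*\mathbf X_s^T\approx L\mathbf R_s^*$, by \eqref{eq:sample_covariance}. Hence the effective Fisher information is exactly linear in $L$, and we may write
\[
\mathrm{CRB}_\omega=\frac{1}{L\,G},
\]
where $G$ is the per-snapshot Schur-complement information $F_{\omega\omega}-\mathbf F_{\omega\eta}\mathbf F_{\eta\eta}^{-1}\mathbf F_{\eta\omega}$ evaluated with $L=1$. Equal CRB then means
\[
\frac{L_{\rm CPA}}{L_{\rm ULA}}=\frac{G_{\rm ULA}}{G_{\rm CPA}},
\]
so it suffices to show $G_{\rm ULA}\ge G_{\rm CPA}$, with strict order-wise inequality asymptotically.

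To estimate $G$, I would reuse the single-target computation of Appendix A (and of Appendix B for nondegenerate multi-target placements, where the cross terms are negligible). After centering the virtual-array phase, $G$ is proportional to
\[
P_{\rm ant}\,\frac{|\beta|^2}{\sigma^2}\,N_t N_r\,\mathrm{Var}\bigl(\text{virtual positions}\bigr),
\]
where $N_t$ factors out of $\mathbf a_t^H\mathbf R_s\mathbf a_t$ and $N_r$ counts the receive samples. Under this scaling:
\begin{itemize}
\item For the CPA, $N_tN_r=M_1M_2=\Theta(M_s^2)$ and the virtual positions $mM_2+nM_1$ have variance $\Theta(M_1^2M_2^2)=\Theta(M_s^4)$. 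This gives $G_{\rm CPA}=\Theta(M_s^6)$, consistent with Theorem 1.
\item For the same-aperture ULA, $N_tN_r=\Theta(M_s^4)$ and the virtual positions span $\Theta(M_s^2)$ with variance $\Theta(M_s^4)$. This gives $G_{\rm ULA}=\Theta(M_s^8)$.
\end{itemize}
Therefore
\[
\frac{L_{\rm CPA}}{L_{\rm ULA}}=\Theta(M_s^2)>1
\]
for large $M_s$. At the level of exact constants, a comparison of the explicit leading coefficients would settle the finite-$M_s$ ordering; this is why the statement says ``generally.''

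The main obstacle is making the same-aperture comparison rigorous, for two reasons. First, the transmit power differs between the architectures ($N_t$ vs.\ $M_1$ antennas at fixed per-antenna power). Second, the Schur-complement correction must be shown not to cancel the leading term for the ULA. For the latter, I would use the centered-derivative identity: the correction equals the squared mean of the virtual positions, so $G$ is proportional to their variance. Both variances are computable in closed form, since for the CPA the positions $m$ and $n$ enter independently. Finally, the residual-SI term $\mathbf H_{{\rm SI},s}\mathbf R_s\mathbf H_{{\rm SI},s}^H$ in $\mathbf R_{v,s}$ must be handled. I would assume it is uniformly bounded, or comparable across architectures, so that it affects only constants and leaves the $M_s^2$ ratio intact.
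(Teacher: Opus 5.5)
Your proposal is correct and follows essentially the same route as the paper. In both, the Fisher information is linear in $L$, and the same-aperture condition forces the ULA to have $\Theta(M_s^2)$ elements; substituting this into the $\Theta(M_s^{-6})$ versus $\Theta(M_{\mathrm{U}}^{-4})$ scalings of Theorem~1 gives $L_{\mathrm{CPA}}=\Theta(M_s^2)L_{\mathrm{ULA}}$. Your variance-of-virtual-positions reading of the Schur complement reproduces the Appendix~A closed forms $\tfrac{M_1M_2}{12}(2M_1^2M_2^2-M_1^2-M_2^2)$ and $\tfrac{M_1M_2}{12}(M_1^2+M_2^2-2)$. Your remarks on the fixed per-antenna power and the bounded residual SI spell out assumptions the paper's short proof leaves implicit.
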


\begin{proof}
From the proofs of Theorems 1 and 2 under the same fixed
antenna-power setting, the single-target
CRB and the multi-target trace-CRB both scale inversely with the
number of snapshots. Hence, for fixed \(K_s\), the proposed CPA
and the same-aperture ULA benchmark satisfy
$\mathrm{CRB}_{\omega,\mathrm{CPA}}
=
\Theta\!\left(\frac{1}{L_{\mathrm{CPA}} M_{\mathrm{s}}^{6}}\right),
\mathrm{CRB}_{\omega,\mathrm{ULA}}
=
\Theta\!\left(\frac{1}{L_{\mathrm{ULA}} M_{\mathrm{U}}^{4}}\right),$
where $M_{\mathrm{U}}$ denotes the number of sensing elements of the ULA benchmark.

For the proposed CPA, the physical sensing aperture is determined by the outermost deployed sensing element  and thus satisfies
\begin{equation}
A_{\mathrm{CPA}}=\Theta(M_1 M_2 d)=\Theta(M_{\mathrm{s}}^{2} d),
\end{equation}
since $M_1=\rho(M_{\mathrm{s}}+1)$ and $M_2=(1-\rho)(M_{\mathrm{s}}+1)$ imply $M_1M_2=\Theta(M_{\mathrm{s}}^{2})$. For a ULA with $M_{\mathrm{U}}$ sensing elements, its physical aperture satisfies
\begin{equation}
A_{\mathrm{ULA}}=(M_{\mathrm{U}}-1)d=\Theta(M_{\mathrm{U}} d).
\end{equation}
Therefore, under the same physical aperture condition $A_{\mathrm{CPA}}\asymp A_{\mathrm{ULA}}$, we have
$M_{\mathrm{U}}=\Theta(M_{\mathrm{s}}^{2}).$

Substituting this relation into the ULA CRB scaling gives
\begin{equation}
\mathrm{CRB}_{\omega,\mathrm{ULA}}
=
\Theta\!\left(\frac{1}{L_{\mathrm{ULA}} M_{\mathrm{s}}^{8}}\right).
\end{equation}
To attain the same spatial-frequency CRB, it is therefore necessary that
$\Theta\!\left(\frac{1}{L_{\mathrm{CPA}} M_{\mathrm{s}}^{6}}\right)
=
\Theta\!\left(\frac{1}{L_{\mathrm{ULA}} M_{\mathrm{s}}^{8}}\right),$
which yields
$L_{\mathrm{CPA}}=\Theta(M_{\mathrm{s}}^{2})L_{\mathrm{ULA}}.$
Hence, under the same physical aperture, the proposed CPA generally requires more snapshots than the ULA benchmark to attain the same spatial-frequency CRB. This proves Theorem~3.
\end{proof}

% \begin{proof}

% From Theorems 1 and 2, and by retaining the snapshot factor explicitly in
% the effective Fisher information, the CPA and ULA CRBs scale as
% \[
% \mathrm{CRB}_{\omega,\mathrm{CPA}}
% =
% \Theta\!\left(\frac{1}{L_{\mathrm{CPA}} M_s^6}\right),
% \qquad
% \mathrm{CRB}_{\omega,\mathrm{ULA}}
% =
% \Theta\!\left(\frac{1}{L_{\mathrm{ULA}} M_{\mathrm{U}}^4}\right),
% \]
% where $M_{\mathrm{U}}$ denotes the number of sensing elements of the ULA.
% For the proposed CPA, the sensing aperture is determined by the largest
% sensor location and hence satisfies
% \[
% A_{\mathrm{CPA}}=\Theta(M_1M_2 d)=\Theta(M_s^2 d),
% \]
% since $M_1=\rho(M_s+1)$ and $M_2=(1-\rho)(M_s+1)$ imply
% $M_1M_2=\Theta(M_s^2)$. For a ULA with $M_{\mathrm{U}}$ sensing elements,
% its physical aperture satisfies
% \[
% A_{\mathrm{ULA}}=(M_{\mathrm{U}}-1)d=\Theta(M_{\mathrm{U}}d).
% \]
% Therefore, under the same physical aperture condition
% $A_{\mathrm{ULA}}\asymp A_{\mathrm{CPA}}$, we have
% \[
% M_{\mathrm{U}}=\Theta(M_s^2).
% \]
% Substituting this relation into the ULA CRB expression yields
% \[
% \mathrm{CRB}_{\omega,\mathrm{ULA}}
% =
% \Theta\!\left(\frac{1}{L_{\mathrm{ULA}} M_s^8}\right).
% \]
% To attain the same order-wise spatial-frequency CRB, we require
% \[
% \Theta\!\left(\frac{1}{L_{\mathrm{CPA}} M_s^6}\right)
% =
% \Theta\!\left(\frac{1}{L_{\mathrm{ULA}} M_s^8}\right),
% \]
% which immediately gives
% \[
% L_{\mathrm{CPA}}=\Theta(M_s^2)L_{\mathrm{ULA}}.
% \]
% \end{proof}

To validate the order-wise CRB scaling in Theorem~1,
Fig.~\ref{fig:crb_vs_ms_single_target} illustrates the CRB changes with the number of sensing antenna elements $M_s$ for different $\rho$. The results demonstrate  that, in the large-$M_s$ regime, the CPA and ULA curves approach the theoretical orders $M_s^{-6}$ and $M_s^{-4}$, respectively. Meanwhile, when $\rho=0.5$, where the number of transmit antennas is equal to the number of receive antennas, the CRB is less than the same number of sensing antenna elements with  $\rho=0.2$. This is because, for a fixed $M_s$, $\rho=0.5$ yields $M_1\approx M_2$, which maximizes the product $M_1M_2$ and thereby
generates a richer CPA-induced virtual array with a larger effective
aperture. Consequently, the balanced Tx/Rx configuration offers improved
angular resolution and lower CRB.

{\color{black}

\subsection{Problem Formulation}

In this subsection, we formulate a joint resource allocation problem for the considered TDD communication and FD sensing ISAC system. The objective is to maximize the weighted 
communication sum rate by jointly designing the sensing
precoder, the downlink communication precoder, and the uplink
receive beamformers, subject to the sensing accuracy
requirement, BS transmit-power budget, residual SI constraints,
and communication QoS constraints. The optimization problem is formulated as
\begin{equation}
\small
\begin{aligned}
\mathbf P1:\quad
&\max_{\mathbf W_s,\mathbf W_c^{\rm DL},\mathbf W_c^{\rm UL}}
\quad
\alpha_{\rm DL}R_{\rm DL}
+
\alpha_{\rm UL}R_{\rm UL}
\\
\mathrm{s.t.}\quad
\mathrm{C1}:&
\quad
\operatorname{CRB}_{\theta}(\mathbf W_s)
\le
\Gamma_{\rm CRB},
\\
\mathrm{C2}:&
\quad
\|\mathbf W_s\|_F^2
+
\|\mathbf W_c^{\rm DL}\|_F^2
\le
P_{\rm BS}^{\max},
\\
\mathrm{C3}:&
\quad
\left[
\mathbf H_{{\rm SI},s}
\mathbf R_s(\mathbf W_s)
\mathbf H_{{\rm SI},s}^H
\right]_{m,m}
\le
\Gamma_{{\rm SI},s},
\\
&\quad
m=1,\ldots,M_2,
\\
\mathrm{C4}:&
\quad
\left[
\mathbf H_{{\rm SI},c}
\mathbf R_s(\mathbf W_s)
\mathbf H_{{\rm SI},c}^H
\right]_{q,q}
\le
\Gamma_{{\rm SI},c},
\\
&\quad
q=1,\ldots,M_c,
\\
\mathrm{C5}:&
\quad
\gamma_k^{\rm DL}
\ge
\gamma_{\rm DL}^{\rm th},
\quad
k=1,\ldots,K_c,
\\
\mathrm{C6}:&
\quad
\gamma_k^{\rm UL}
\ge
\gamma_{\rm UL}^{\rm th},
\quad
k=1,\ldots,K_c,
\\
\mathrm{C7}:&
\quad
\|\mathbf w_{c,k}^{\rm UL}\|^2=1,
\quad
k=1,\ldots,K_c.
\end{aligned}
\end{equation}
Constraint C1 guarantees the required sensing accuracy by
limiting the angle-related CRB. Constraint C2 imposes the total
BS transmit-power budget for the dedicated sensing waveform
and the downlink communication signals. Constraints C3 and C4
restrict the residual SI power at the sensing receive subarray
and the communication transceiver antennas, respectively.
Constraints C5 and C6 ensure the downlink and uplink
communication QoS requirements, respectively. Constraint C7
normalizes the uplink receive beamformers.

Problem $\mathbf P1$ is non-convex due to the coupled
communication sum-rate objective, the fractional
SINR expressions in both downlink and uplink links, and the
CRB constraint. In particular, the sensing precoder
$\mathbf W_s$ affects the downlink rate through the sensing
interference term, the uplink rate through the residual SI term,
and the sensing accuracy through the induced covariance
$\mathbf R_s(\mathbf W_s)$. Moreover, the downlink precoder
$\mathbf W_c^{\rm DL}$ is coupled with the downlink SINR
constraints and the total BS transmit-power budget. Therefore,
a tractable iterative algorithm is required, which will be
developed in the following section.
}

{\color{black}
\section{Problem Solution}\label{sec4}
To tackle the non-convex problem $\mathbf P1$, we first
reformulate the downlink and uplink rate terms using
fractional programming (FP). Based on the resulting equivalent
formulation, an alternating optimization (AO) algorithm is
developed to successively update the FP auxiliary variables,
the communication beamformers, and the sensing transmit
covariance until convergence.

\subsection{FP-Based Rate Reformulation}

Let $q\in\{\mathrm{DL},\mathrm{UL}\}$ denote the communication
direction, where $\alpha_q$ is the corresponding downlink or
uplink rate weight. The desired signal amplitudes are defined as
$a_k^{\rm DL}
=
(\mathbf h_{c,k}^{\rm DL})^H
\mathbf w_{c,k}^{\rm DL},
\qquad
a_k^{\rm UL}
=
\sqrt{p_k^{\rm UL}}
(\mathbf w_{c,k}^{\rm UL})^H
\mathbf h_{c,k}^{\rm UL}.$
The corresponding total received powers are
\begin{align}
\small
\mathcal I_k^{\rm DL}
={}&
\sum_{j=1}^{K_c}
\left|
(\mathbf h_{c,k}^{\rm DL})^H
\mathbf w_{c,j}^{\rm DL}
\right|^2
+
(\mathbf h_{s,k}^{\rm DL})^H
\mathbf R_s
\mathbf h_{s,k}^{\rm DL}
+
\sigma_{\rm DL}^2,
\\
\mathcal I_k^{\rm UL}
={}&
\sum_{j=1}^{K_c}
p_j^{\rm UL}
\left|
(\mathbf w_{c,k}^{\rm UL})^H
\mathbf h_{c,j}^{\rm UL}
\right|^2
\nonumber\\
&+
(\mathbf w_{c,k}^{\rm UL})^H
\mathbf H_{{\rm SI},c}
\mathbf R_s
\mathbf H_{{\rm SI},c}^H
\mathbf w_{c,k}^{\rm UL}
+
\sigma_{\rm UL}^2
\|\mathbf w_{c,k}^{\rm UL}\|^2.
\end{align}
Accordingly, the
SINR can be uniformly expressed as
\begin{equation}
    \gamma_k^q
=
\frac{|a_k^q|^2}
{\mathcal I_k^q-|a_k^q|^2},
q\in\{\mathrm{DL},\mathrm{UL}\}.
\end{equation}

Following the FP-based rate transformation in~\cite{11168854},
the weighted communication sum rate can be equivalently expressed as
\begin{equation}
R_{\rm sum}
=
\max_{\{\nu_k^q\geq0,\,y_k^q\in\mathbb C\}}
\widetilde R_{\rm sum},
\end{equation}
where
\begin{align}
\widetilde R_{\rm sum}
&=
\frac{B}{\ln 2}
\sum_{q\in\{\mathrm{DL},\mathrm{UL}\}}
\alpha_q
\sum_{k=1}^{K_c}
\Bigg[
\ln(1+\nu_k^q)-\nu_k^q
\nonumber\\
&+
2\operatorname{Re}
\left\{
(y_k^q)^*
\sqrt{1+\nu_k^q}\,
a_k^q
\right\}
-
|y_k^q|^2\mathcal I_k^q
\Bigg].
\end{align}
For given physical design variables, the optimal auxiliary
variables are updated in closed form as
\begin{equation}
(\nu_k^q)^\star
=
\gamma_k^q,
(y_k^q)^\star
=
\frac{
\sqrt{1+(\nu_k^q)^\star}\,a_k^q
}{
\mathcal I_k^q
},
q\in\{\mathrm{DL},\mathrm{UL}\}.
\label{eq:fp_auxiliary_update}
\end{equation}
The uplink receive beamformers admit closed-form updates,
while the transformed objective is used to construct tractable
subproblems for the downlink precoder and the sensing transmit
covariance.

\subsection{Communication Beamforming Optimization with Fixed Sensing Transmit Design}
For fixed $\mathbf W_s$, the communication beamformers are
updated by first determining the uplink receive beamformers in
closed form and then optimizing the downlink precoder.

For the uplink link, define the interference-plus-noise covariance
matrix of the $k$-th communication UAV as
\begin{equation}
\mathbf C_k^{\rm UL}
=
\sum_{j\ne k}^{K_c}
p_j^{\rm UL}
\mathbf h_{c,j}^{\rm UL}
(\mathbf h_{c,j}^{\rm UL})^H
+
\mathbf H_{{\rm SI},c}
\mathbf R_s(\mathbf W_s)
\mathbf H_{{\rm SI},c}^H
+
\sigma_{\rm UL}^2\mathbf I_{M_c}.
\end{equation}
The SINR-maximizing receive beamformer is given by
\begin{equation}
(\mathbf w_{c,k}^{\rm UL})^\star
=
\frac{
(\mathbf C_k^{\rm UL})^{-1}
\mathbf h_{c,k}^{\rm UL}
}{
\left\|
(\mathbf C_k^{\rm UL})^{-1}
\mathbf h_{c,k}^{\rm UL}
\right\|
},
\quad
k=1,\ldots,K_c.
\label{eq:ul_beamformer_update}
\end{equation}
The above normalization satisfies $\mathrm{C7}$, while the
SINR-maximizing update preserves $\mathrm{C6}$ for a feasible
current iterate.

For the downlink link, define the fixed
sensing-interference-plus-noise power as
\begin{equation}
\zeta_k^{\rm DL}
=
(\mathbf h_{s,k}^{\rm DL})^H
\mathbf R_s(\mathbf W_s)
\mathbf h_{s,k}^{\rm DL}
+
\sigma_{\rm DL}^2.
\end{equation}

Without loss of optimality, the phase of
$\mathbf w_{c,k}^{\rm DL}$ can be chosen such that
$(\mathbf h_{c,k}^{\rm DL})^H\mathbf w_{c,k}^{\rm DL}$
is real and nonnegative~\cite{10158711}. Accordingly, constraint
$\mathrm{C5}$ is equivalently recast as the SOC constraint
\begin{equation}
\small
\begin{aligned}
\widetilde{\mathrm{C5}}:\quad
&\left\|
\left[
\left\{
(\mathbf h_{c,k}^{\rm DL})^H
\mathbf w_{c,j}^{\rm DL}
\right\}_{j\ne k},
\sqrt{\zeta_k^{\rm DL}}
\right]^T
\right\|_2
\le
\frac{
\operatorname{Re}
\left\{
(\mathbf h_{c,k}^{\rm DL})^H
\mathbf w_{c,k}^{\rm DL}
\right\}
}{
\sqrt{\gamma_{\rm DL}^{\rm th}}
},
\\
&k=1,\ldots,K_c.
\end{aligned}
\end{equation}

For fixed FP auxiliary variables, the downlink precoder is updated
by solving
\begin{equation}
\small
\begin{aligned}
\mathbf P2:\quad
\max_{\mathbf W_c^{\rm DL}}\quad
&
\frac{\alpha_{\rm DL}B}{\ln 2}
\sum_{k=1}^{K_c}
\Bigg[
2\operatorname{Re}
\left\{
(y_k^{\rm DL})^*
\sqrt{1+\nu_k^{\rm DL}}\,
(\mathbf h_{c,k}^{\rm DL})^H
\mathbf w_{c,k}^{\rm DL}
\right\}
\\
&\hspace{21mm}
-
|y_k^{\rm DL}|^2
\mathcal I_k^{\rm DL}
\Bigg]
\\
\mathrm{s.t.}\quad
&
\widetilde{\mathrm{C2}}:\ 
\left\|
\mathbf W_c^{\rm DL}
\right\|_F^2
\le
P_{\rm BS}^{\max}
-
\left\|
\mathbf W_s
\right\|_F^2,
\\
&
\widetilde{\mathrm{C5}}.
\end{aligned}
\label{prob:downlink_subproblem}
\end{equation}
For fixed FP auxiliary variables, the objective of $\mathbf P2$
is concave in $\mathbf W_c^{\rm DL}$, while
$\widetilde{\mathrm{C2}}$ and $\widetilde{\mathrm{C5}}$ define a
convex feasible set. Hence, $\mathbf P2$ is a convex
quadratically constrained problem and can be efficiently solved
using standard convex optimization tools.

\subsection{Sensing Transmit Design Optimization with Fixed
Communication Beamformers}

For fixed communication beamformers
$\{\mathbf W_c^{\rm DL},\mathbf W_c^{\rm UL}\}$, we introduce the
lifted sensing covariance
$\mathbf R_s=\mathbf W_s\mathbf W_s^H\succeq\mathbf 0$.
Define
$\mathbf G_{s,k}^{\rm DL}
=
\mathbf h_{s,k}^{\rm DL}
(\mathbf h_{s,k}^{\rm DL})^H,
\mathbf G_{{\rm SI},k}^{\rm UL}
=
\mathbf H_{{\rm SI},c}^H
\mathbf w_{c,k}^{\rm UL}
(\mathbf w_{c,k}^{\rm UL})^H
\mathbf H_{{\rm SI},c}.$

For fixed FP auxiliary variables, define
$\lambda_k^{\rm DL}
=
\alpha_{\rm DL}|y_k^{\rm DL}|^2$
and
$\lambda_k^{\rm UL}
=
\alpha_{\rm UL}|y_k^{\rm UL}|^2$.
After omitting the common positive factor $B/\ln 2$ and the
terms independent of $\mathbf R_s$, maximizing the transformed
weighted sum-rate is equivalent to minimizing the weighted
sensing-interference and residual-SI terms.

With fixed communication beamformers, the downlink and uplink
QoS constraints can be equivalently rewritten as
\begin{align}
\small
\overline{\mathrm{C5}}:\quad
\mathcal I_k^{\rm DL}
&\le
\left(
1+\frac{1}{\gamma_{\rm DL}^{\rm th}}
\right)
|a_k^{\rm DL}|^2,
\quad
k=1,\ldots,K_c,
\\
\overline{\mathrm{C6}}:\quad
\mathcal I_k^{\rm UL}
&\le
\left(
1+\frac{1}{\gamma_{\rm UL}^{\rm th}}
\right)
|a_k^{\rm UL}|^2,
\quad
k=1,\ldots,K_c.
\end{align}

The total transmit-power constraint becomes
\begin{equation}
\overline{\mathrm{C2}}:\quad
\operatorname{tr}(\mathbf R_s)
\le
P_{\rm BS}^{\max}
-
\left\|
\mathbf W_c^{\rm DL}
\right\|_F^2.
\end{equation}

To obtain a tractable CRB constraint, we adopt a fixed
worst-case upper bound
$\overline{\mathbf R}_{{\rm SI},s}$ on the residual-SI
covariance and define
\begin{equation}
\overline{\mathbf R}_{v,s}
=
\mathbf R_{\ell,s}
+
\overline{\mathbf R}_{{\rm SI},s}
+
\sigma_s^2\mathbf I_{M_2}.
\end{equation}
With $\overline{\mathbf R}_{v,s}$ fixed, the resulting FIM
$\overline{\mathbf F}_{\boldsymbol\xi}(\mathbf R_s)$ is affine
in $\mathbf R_s$.

By introducing an auxiliary matrix $\mathbf T_\theta$, the CRB
constraint is conservatively reformulated as
\begin{align}
\small
\overline{\mathrm{C1a}}:\quad
\begin{bmatrix}
\overline{\mathbf F}_{\boldsymbol\xi}(\mathbf R_s)
&
\mathbf E_\theta
\\
\mathbf E_\theta^T
&
\mathbf T_\theta
\end{bmatrix}
&\succeq\mathbf 0,
\\
\overline{\mathrm{C1b}}:\quad
\operatorname{tr}(\mathbf T_\theta)
&\le
\Gamma_{\rm CRB},
\end{align}
where $\mathbf E_\theta$ selects the angle-related parameters
from the complete parameter vector.

Since $\mathbf W_s\in\mathbb C^{M_1\times d_s}$, the lifted
covariance implicitly satisfies
$\operatorname{rank}(\mathbf R_s)\leq d_s$.
By dropping this non-convex rank constraint, the sensing
transmit design subproblem is relaxed as
\begin{equation}
\small
\begin{aligned}
\mathbf P3:\quad
\min_{\mathbf R_s,\mathbf T_\theta}\quad
&
\sum_{k=1}^{K_c}
\lambda_k^{\rm DL}
\operatorname{tr}
\left(
\mathbf G_{s,k}^{\rm DL}\mathbf R_s
\right)
\\
&+
\sum_{k=1}^{K_c}
\lambda_k^{\rm UL}
\operatorname{tr}
\left(
\mathbf G_{{\rm SI},k}^{\rm UL}\mathbf R_s
\right)
\\
\mathrm{s.t.}\quad
&
\overline{\mathrm{C1a}},\
\overline{\mathrm{C1b}},\
\overline{\mathrm{C2}},\
\mathrm{C3},\
\mathrm{C4},
\\
&
\overline{\mathrm{C5}},\
\overline{\mathrm{C6}},\
\mathbf R_s\succeq\mathbf 0.
\end{aligned}
\label{prob:sensing_subproblem}
\end{equation}
Problem $\mathbf P3$ is an SDP obtained through semidefinite
relaxation and can be efficiently solved using standard convex
optimization tools.
Let $\mathbf R_s^\star$ denote the solution of $\mathbf P3$.
If $\operatorname{rank}(\mathbf R_s^\star)\leq d_s$, an exact
sensing precoder can be recovered from its compact eigenvalue
decomposition; otherwise, Gaussian randomization is employed
to obtain a feasible rank-$d_s$ sensing precoder.
The overall AO procedure for solving $\mathbf P1$ is summarized
in Algorithm~\ref{alg:proposed_AO}.

\begin{algorithm}[t]
\small
\caption{AO-Based Joint Communication and Sensing Design}
\label{alg:proposed_AO}
\begin{algorithmic}[1]
\REQUIRE System parameters, feasible initial points
$\mathbf R_s^{(0)}$, $\mathbf W_c^{{\rm DL},(0)}$, and
$\mathbf W_c^{{\rm UL},(0)}$, convergence tolerance
$\epsilon$.
\ENSURE Optimized
$\mathbf R_s^\star$, $\mathbf W_c^{{\rm DL},\star}$, and
$\mathbf W_c^{{\rm UL},\star}$.
\STATE Set $t=0$ and calculate $R_{\rm sum}^{(0)}$.
\REPEAT
\STATE Update
$\{\nu_k^{q,(t+1)},y_k^{q,(t+1)}\}$,
$\forall k$ and $q\in\{{\rm DL},{\rm UL}\}$,
using~\eqref{eq:fp_auxiliary_update}.
\STATE Update $\mathbf W_c^{{\rm UL},(t+1)}$
using~\eqref{eq:ul_beamformer_update} with
$\mathbf R_s^{(t)}$.
\STATE Solve~\eqref{prob:downlink_subproblem} to obtain
$\mathbf W_c^{{\rm DL},(t+1)}$.
\STATE Solve~\eqref{prob:sensing_subproblem} to obtain
$\mathbf R_s^{(t+1)}$.
\STATE Calculate $R_{\rm sum}^{(t+1)}$ and set
$t\leftarrow t+1$.
\UNTIL{
$\displaystyle
\frac{
\left|R_{\rm sum}^{(t)}-R_{\rm sum}^{(t-1)}\right|
}{
\left|R_{\rm sum}^{(t-1)}\right|
}
\leq\epsilon$.
}
\STATE Set
$\mathbf R_s^\star=\mathbf R_s^{(t)}$,
$\mathbf W_c^{{\rm DL},\star}
=\mathbf W_c^{{\rm DL},(t)}$, and
$\mathbf W_c^{{\rm UL},\star}
=\mathbf W_c^{{\rm UL},(t)}$.
\end{algorithmic}
\end{algorithm}
}

The per-iteration complexity is dominated by the updates of the
communication beamformers and the sensing covariance. The uplink beamforming update requires
$\mathcal O(K_cM_c^3)$ operations, while $\mathbf P2$ and
$\mathbf P3$ are solved as an SOCP and an SDP, respectively.
Hence, the overall complexity scales as
$\mathcal O\!\left(
I_{\rm AO}
[K_cM_c^3+\mathcal C_{\rm SOCP}+\mathcal C_{\rm SDP}]
\right)$, where $I_{\rm AO}$ is the number of AO iterations.
Typically, the SDP subproblem $\mathbf P3$ dominates the
computational cost.

}

\section{Simulation Results}\label{sec5}

In this section, we present a comprehensive simulation study to evaluate the effectiveness of the proposed shared-aperture ISAC design under different system parameters and operating conditions. Unless explicitly stated otherwise, the simulation parameters are set according to Table~\ref{tab:sim_parameters}, including the antenna configuration, numbers of communication UAVs and sensing targets, number of sensing snapshots, BS transmit-power budget, downlink/uplink SINR thresholds, CRB threshold, and residual-SI level.

\begin{table}[t]
\centering
\caption{Simulation Parameter Settings}
\label{tab:sim_parameters}
\begin{tabular}{|l|c|}
\hline
Total number of BS antennas, \(M\) & 10 \\
\hline
Co-prime sensing pair, \((M_1,M_2)\) & \((3,4)\) \\
\hline
Number of communication UAVs, \(K_c\) & 2 \\
\hline
Number of sensing targets, \(K_s\) & 3 \\
\hline
Number of sensing snapshots, \(L\) & 256 \\
\hline
Maximum BS transmit power, \(P_{\mathrm{BS}}^{\max}\) & 40 dBm \\
\hline
SINR thresholds, 
\((\gamma_{\mathrm{DL}}^{\mathrm{th}},\gamma_{\mathrm{UL}}^{\mathrm{th}})\)
& \((10,10)\) dB \\
\hline
CRB threshold, \(\Gamma_{\mathrm{CRB}}\) 
& \(5\times10^{-6}\,\mathrm{rad}^{2}\) \\
\hline
Residual-SI strength, \(\rho_{\mathrm{SI}}\) & 0 dB \\
\hline
\end{tabular}
\end{table}

The downlink and uplink communication channels between the ground BS and the communication UAVs are modeled as i.i.d. Rayleigh fading vectors with unit average power, i.e.,
$\mathbf h_{c,k}^{\rm DL},\mathbf h_{c,k}^{\rm UL}
\sim\mathcal{CN}(\mathbf 0,\mathbf I)$,
where the large-scale path loss is absorbed into the channel normalization factor to isolate the impact of small-scale fading. For each simulated scenario, all numerical results are obtained by averaging over 500 independent channel realizations to ensure statistical reliability and reduce random fluctuations in the performance metrics.

\begin{figure}[t]
    \centering
    \includegraphics[width=\columnwidth]{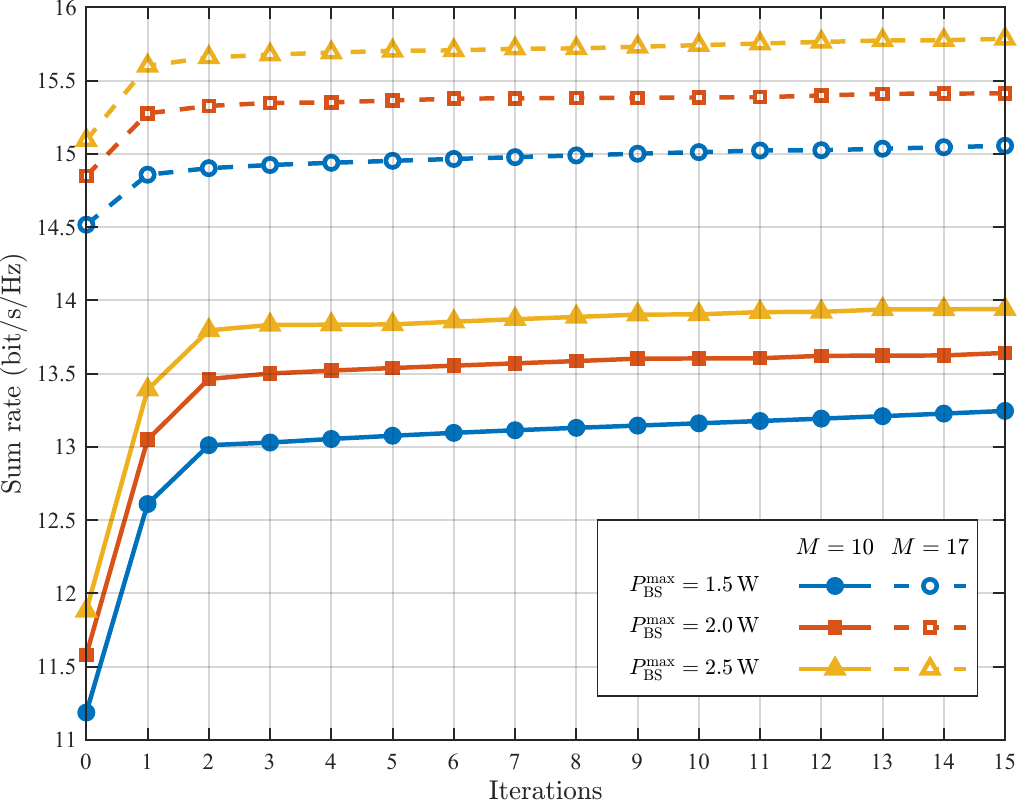}
    \caption{Convergence performance of Algorithm~1.}
    \label{fig:AO_convergence}
\end{figure}

To provide a fair and thorough performance comparison, we benchmark the proposed coprime-array-based ISAC beamforming scheme against several representative baseline array configurations. This comparison allows us to highlight the trade-offs among sensing accuracy, communication throughput, and hardware complexity, thereby demonstrating the advantages of the coprime array in the considered ISAC-UAV scenario.
\begin{itemize}
\item \emph{Baseline 1, Partitioned ULA subarrays} \cite{4350230}:
A ULA with the same total number of antennas $M$ as the proposed scheme is employed. Specifically, the array is divided into three contiguous subarrays assigned to sensing transmission, sensing reception, and TDD communication, respectively. The communication subarray operates as a transceiver, supporting downlink transmission and uplink reception in the corresponding TDD slots.

\item \emph{Baseline 2, Isotropic sensing transmission} \cite{10328645}:
This scheme adopts the same CPA sensing layout and communication-antenna 
positions as the proposed scheme. The sensing transmit covariance is 
restricted to an isotropic form
$\mathbf R_s^{\rm iso}=\frac{P_s}{M_1}\mathbf I_{M_1},$
where \(P_s\) denotes the sensing transmit power. The design is otherwise subject to the same sensing-accuracy, total BS transmit-power, downlink/uplink QoS, and residual-SI constraints as the proposed scheme. This baseline isolates the gain achieved by CRB-oriented sensing covariance optimization under the same CPA geometry.

\item \emph{Baseline 3, Minimum-Redundancy Array (MRA)} \cite{9352510}:
The sensing Tx and Rx positions are jointly determined through a combinatorial search on the common ULA grid to enlarge the contiguous aperture of the MIMO virtual array while reducing redundant virtual spacings. The same numbers of sensing Tx, sensing Rx, distinct sensing, and communication antennas as those of the proposed CPA architecture are maintained, and the remaining grid positions are assigned to the TDD communication transceiver subarray.

\end{itemize}

Fig.~\ref{fig:AO_convergence} illustrates the convergence behavior of Algorithm~1 under different maximum BS transmit power limitations and the number of antenna elements. It is observed that, across all parameter configurations, the system sum rate increases rapidly during the first few iterations. After that, the performance gain between successive iterations gradually diminishes, and the sum rate eventually goes to stable, which demonstrates that the proposed alternating optimization algorithm exhibits numerical convergence properties. As the maximum BS transmit power increases from $1.5~\mathrm{W}$ to $2.5~\mathrm{W}$, the converged sum rate improves notably. This is expected, since a larger transmission power provides greater optimization flexibility for the joint allocation of resources between the sensing waveform and the downlink communication beams. Furthermore, under the same transmit power level, the sum rate achieved with $M=17$ is consistently higher than that with $M=10$, indicating that increasing the array size enhances both communication beamforming capability and sensing performance. Overall, the majority of the performance gain is attained within the initial few iterations, confirming that Algorithm~1 is capable of converging rapidly to a stable solution under the considered transmit power and number of antenna elements.

\begin{figure}[t]
    \centering
    \includegraphics[width=\columnwidth]{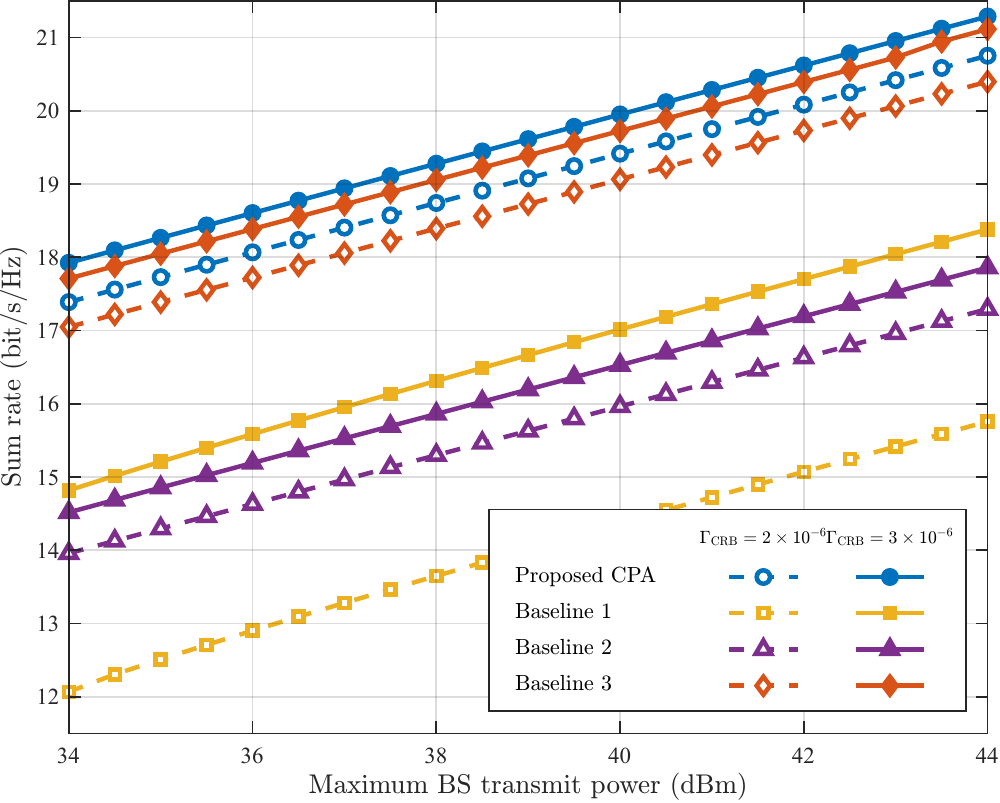}
    \caption{Sum-rate performance versus the maximum BS transmit power.}
    \label{fig:sum_rate_vs_power}
\end{figure}

\begin{figure}[t]
    \centering
    \includegraphics[width=\columnwidth]{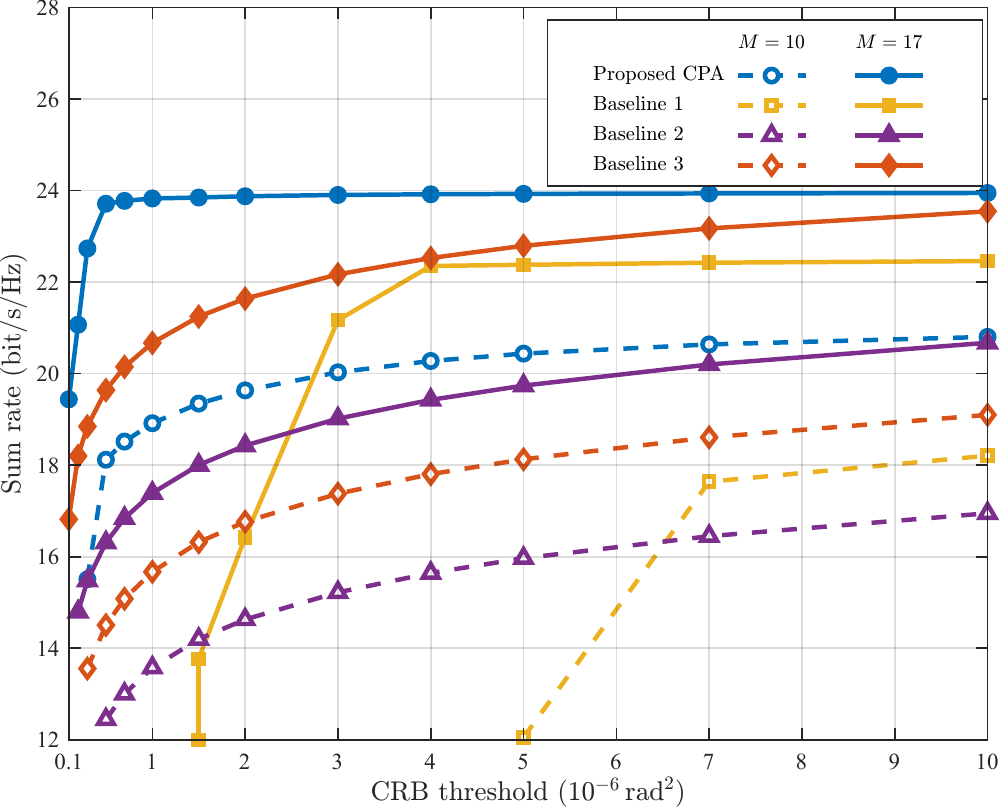}
    \caption{Sum-rate performance versus the CRB threshold.}
    \label{fig:sum_rate_vs_crb_threshold}
\end{figure}

\begin{figure}[t]
    \centering
    \includegraphics[width=\columnwidth]{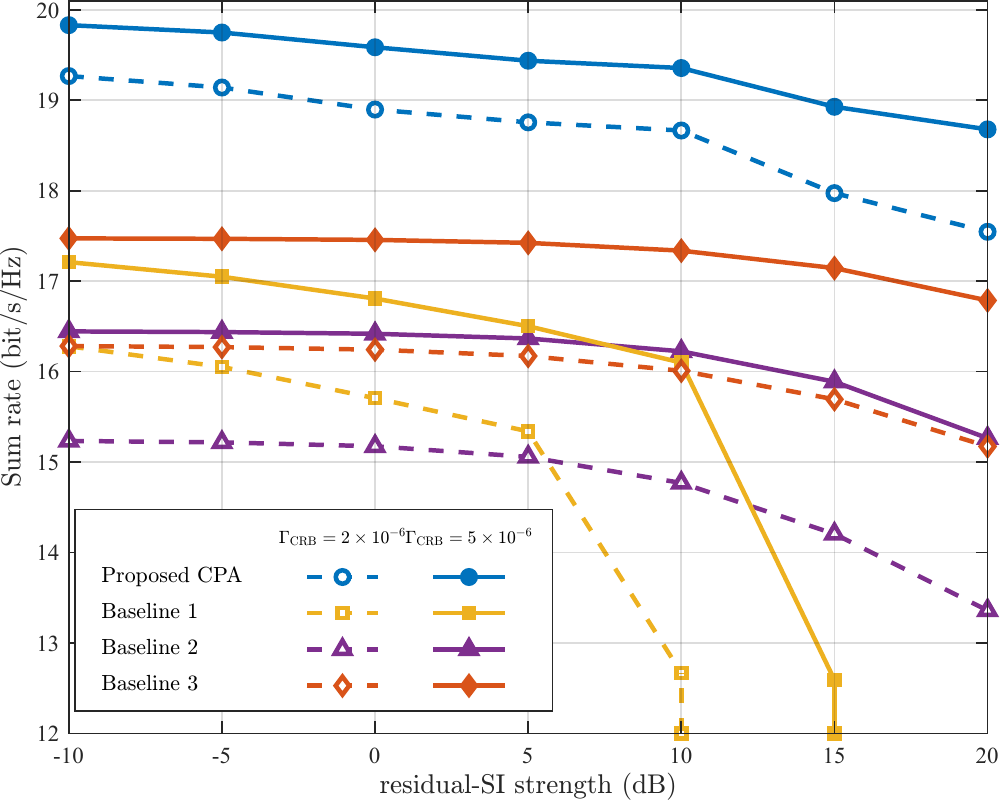}
    \caption{Sum-rate performance versus the residual SI power.}
    \label{fig:sum_rate_vs_communication_si}
\end{figure}

Fig.~\ref{fig:sum_rate_vs_power} presents the system sum rate as a function of the maximum BS transmit power under two different CRB thresholds. It is observed that, for all considered schemes, the sum rate increases monotonically with $P_{\mathrm{BS}}^{\max}$. This trend is attributed to the fact that a larger power budget expands the feasible resource-allocation space for the sensing waveform and the downlink communication beams, thereby enabling the system to achieve higher communication rates while still satisfying the constraints on communication QoS, residual self-interference (SI), and sensing accuracy. Under the same transmit-power level, the curves corresponding to $\Gamma_{\mathrm{CRB}}=3\times10^{-6}\,\mathrm{rad}^{2}$ consistently lie above those for $\Gamma_{\mathrm{CRB}}=2\times10^{-6}\,\mathrm{rad}^{2}$. This indicates that relaxing the sensing accuracy requirement reduces the resources required to satisfy the CRB constraint, thus enabling system allocating more power to data transmission.  The proposed co-prime array (CPA) scheme achieves the highest sum rate across the entire power range and consistently outperforms the minimum-redundancy array (MRA) baseline. This confirms that the CPA virtual array structure combined with joint resource allocation can effectively exploit the transmit power. In comparison with the partitioned ULA, the larger virtual aperture offered by the CPA lowers the resource cost of satisfying for the same CRB target. Moreover, the pronounced performance advantage of the proposed scheme over the CPA-Isotropic benchmark highlights that merely adopting the CPA architecture is insufficient to fully unlock its potential. The optimization of the sensing covariance matrix guaranteeing the sensing accuracy, communication interference, and residual SI enable to balance the tradeoff between sensing and communication..

Fig.~\ref{fig:sum_rate_vs_crb_threshold} illustrates the system sum rate as a function of the CRB threshold under different number of antenna elements. A smaller $\Gamma_{\mathrm{CRB}}$ corresponds to a more  strict requirement on angle-estimation accuracy. In this regime, the system must devote more communication resources to the sensing waveform, thereby limiting the design flexibility available for communication beamforming. As a result, the achievable sum rate decreases, and in some cases, certain schemes may even fail to satisfy all the constraints simultaneously. As $\Gamma_{\mathrm{CRB}}$ gradually increases, the sensing accuracy requirement is relaxed, allowing those schemes limited by the CRB constraint to allocate more resources to communication transmission. Consequently, the sum rate improves notably. {\color{black}As shown in Baseline 1, the problem becomes feasible only with larger CRB thresholds than those of the other schemes. Meanwhile, the proposed scheme can accommodate more strict CRB constraints.} It should be also noted that once the CRB constraint no longer constitutes the primary performance bottleneck, the curves tend to saturate.

It is worth noting that the proposed CPA scheme with $M=17$ maintains a nearly constant sum rate across the entire considered range of CRB thresholds. This observation indicates that the larger virtual aperture and richer angular observations provided by the CPA enable the system to meet stringent sensing requirements with little degradation in communication performance. Even with $M=10$, the proposed scheme remains feasible under relatively stringent CRB thresholds and achieves a higher sum rate than the baseline schemes with the same array size. In contrast, the partitioned ULA, particularly in the $M=10$ case, requires a considerably looser CRB threshold to become feasible. This simulation results show that its limited effective sensing aperture makes it difficult to simultaneously support high-accuracy sensing and maintain the required communication QoS.

\begin{figure}[t]
    \centering
    \includegraphics[width=\columnwidth]{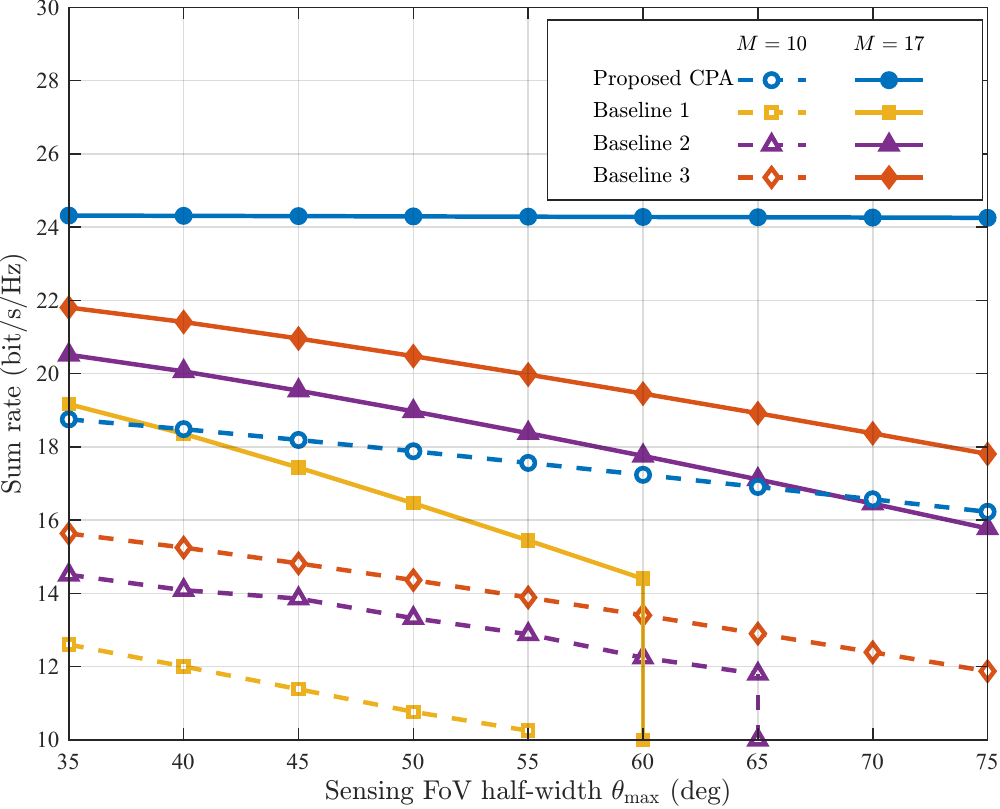}
    \caption{Sum-rate performance versus the sensing field-of-view half-width.}
    \label{fig:sum_rate_vs_sensing_fov}
\end{figure}

Fig.~\ref{fig:sum_rate_vs_communication_si} presents the system sum rate as a function of the residual SI level under different CRB thresholds. It is observed that, for all considered schemes, the sum rate decreases monotonically as the residual SI level increases. This degradation is attributed to the fact that stronger sensing-transmission leakage increases the interference level at both the sensing receive subarray and the communication transceiver antennas.  Under the same residual SI level, the looser CRB threshold of $\Gamma_{\mathrm{CRB}}=5\times10^{-6}\,\mathrm{rad}^{2}$ yields a higher sum rate. This is because the system is afforded greater flexibility in allocating sensing resources, thereby mitigating the communication performance loss imposed by stringent sensing accuracy requirements.
Meanwhile, the proposed CPA scheme consistently achieves the highest sum rate across the entire considered range of residual SI levels. Moreover, it remains feasible throughout this range and experiences only moderate performance degradation, demonstrating that the proposed design exhibits favorable robustness under imperfect SI cancellation. Compared with the CPA-Isotropic baseline, the performance advantage of the proposed scheme indicates that the jointly optimized sensing covariance matrix can effectively tailor the spatial distribution of the transmit energy according to the communication interference, sensing accuracy, and residual SI constraints, rather than relying solely on the geometric advantages of the CPA architecture. In contrast, the performance of the partitioned ULA deteriorates rapidly as the residual SI level increases, and infeasible points emerge at high residual SI levels. This highlights that its limited effective sensing aperture makes it difficult to simultaneously satisfy the communication QoS, sensing accuracy, and residual SI constraints.

\begin{figure}[t]
    \centering
    \includegraphics[width=\columnwidth]{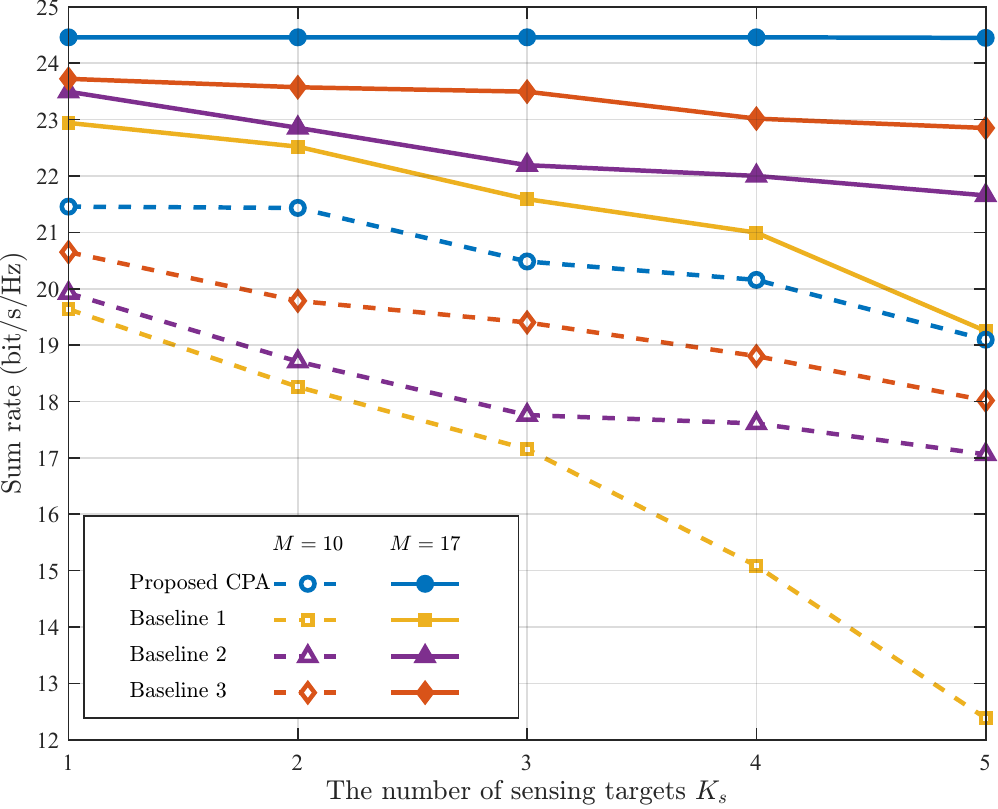}
    \caption{Sum-rate performance versus the number of sensing targets.}
    \label{fig:sum_rate_vs_sensing_targets}
\end{figure} 

Fig.~\ref{fig:sum_rate_vs_sensing_fov} illustrates the system sum rate as a function of the half-width of the sensing field of view, denoted by $\theta_{\max}$, under different array sizes. It is observed that, for most of the considered schemes, the sum rate gradually decreases as $\theta_{\max}$ increases. This is because a wider sensing field of view extends the target distribution over a larger off-axis angular range, thereby increasing the difficulty of satisfying the angle-estimation accuracy requirement. To satisfy the CRB constraint, the system must devote more transmit resources to the sensing task, leaving fewer resources available for communication transmission.
Notably, the proposed CPA scheme with $M=17$ maintains an approximately constant sum rate across the entire considered range of $\theta_{\max}$, indicating that the large virtual aperture of the CPA can support wide-FoV sensing with only a modest communication performance loss. Even with $M=10$, although the sum rate of the proposed scheme decreases somewhat as $\theta_{\max}$ increases, it remains feasible over the entire tested range and consistently outperforms the baseline schemes with the same array size.
Compared with other schemes, the partitioned ULA is the most sensitive to the expansion of the sensing field of view. Its sum rate decreases rapidly, and infeasible points emerge at large values of $\theta_{\max}$. This reveals that its limited effective sensing aperture makes it difficult to simultaneously maintain wide-FoV sensing accuracy and communication QoS. The performance advantage of the proposed scheme over the CPA-Isotropic baseline further indicates that, beyond the virtual-aperture gain provided by the CPA geometry, the optimization of the sensing covariance matrix---with explicit consideration of the target angles, communication interference, and residual SI---plays a critical role in sustaining wide-FoV sensing performance.

Fig.~\ref{fig:sum_rate_vs_sensing_targets} presents the system sum rate as a function of the number of sensing targets $K_s$ under different array sizes. It is observed that, for most of the considered schemes, the sum rate gradually decreases as $K_s$ increases. This is because a larger number of sensing targets increases the dimensionality of the angle parameters to be estimated and intensifies the coupling among different target parameters within the Fisher information matrix (FIM). Under a fixed CRB threshold, the system is required to simultaneously guarantee the angle-estimation accuracy for all targets, thereby devoting more transmit resources to the sensing task and leaving fewer resources available for communication transmission.
Notably, the proposed CPA scheme with $M=17$ maintains an approximately constant sum rate across the entire considered range of $K_s$, indicating that the large virtual aperture of the CPA can effectively alleviate the sensing-resource burden imposed by multi-target sensing. Even with $M=10$, although the sum rate of the proposed scheme decreases somewhat as $K_s$ increases, it consistently remains higher than those of the baseline schemes with the same array size.
Compared with other schemes, the partitioned ULA exhibits the most significant performance degradation and becomes infeasible at relatively small values of $K_s$. This reveals that its limited effective sensing aperture makes it difficult to simultaneously satisfy the multi-target CRB requirement, communication QoS, and residual SI constraints. The performance advantage of the proposed scheme over the CPA-Isotropic baseline further indicates that relying solely on the CPA geometry is insufficient to fully address the demands of multi-target sensing. Instead, it is essential to jointly optimize the sensing covariance matrix by taking into account the target directions, communication interference, and residual SI.

\section{Conclusion}

This paper proposed a shared-aperture ISAC architecture for UAV networks, where a sparse co-prime array is employed for full-duplex sensing while the remaining antenna positions within the same physical aperture are reused for communication. Through the CRB analysis, we showed that the enlarged virtual aperture provided by the CPA enables improved sensing performance compared with the partitioned ULA benchmark in both single-target and nondegenerate multi-target scenarios. Meanwhile, we revealed a space--time sampling tradeoff, where the sensing gain enabled by sparse spatial sampling generally requires additional temporal snapshots under the same physical aperture. Based on the proposed architecture, we further formulated a joint resource allocation problem to maximize the weighted communication sum rate under the sensing accuracy, transmit-power, communication QoS, and residual SI constraints, and developed an AO-based iterative algorithm for its solution. Simulation results showed that the proposed scheme consistently outperforms the considered baseline schemes and maintains favorable communication performance under increasingly demanding sensing and interference conditions. Moreover, the comparison with the isotropic CPA baseline demonstrated that the virtual-aperture gain alone is insufficient to fully exploit the potential of the CPA architecture, and that sensing covariance optimization plays an important role in improving the sensing--communication tradeoff. These results demonstrate the benefits of jointly exploiting sparse-array geometry and resource optimization for shared-aperture FD-ISAC systems.

\bibliographystyle{IEEEtran}
	%\bibliography{EE}
	%\begin{thebibliography}{99}
	%	
	%	%1
	%	\bibitem{ITU}
	%	 Technical Paper: Use Case & Requirements of Fibre-to-The-Room (FTTR), GSTP-FTTR ITU-T SG15 (2021).
	%	
	%\end{thebibliography}
	
	\bibliography{citepaper}                        %ref为.bib文件名

\clearpage
% ============================================================
% Appendices (merged from appendix.tex)
% ============================================================
\appendices
\section{Proof of Theorem 1}
\label{app:singal_target}
\setcounter{equation}{0}
\renewcommand{\theequation}{A.\arabic{equation}}
For the single-target case, the vectorized sensing mean is given by
\(
\tilde{\boldsymbol{\mu}}(\omega,\beta)
=
(\mathbf{X}^{T}\otimes \mathbf{I})\,\beta\,\boldsymbol{\phi}(\omega),
\)
where \(\beta=\beta_{\mathrm{R}}+j\beta_{\mathrm{I}}\). Using the mean-based FIM definition in (19),  together with the single-target parameter partition and Schur-complement formulation in (23)--(25), it remains to evaluate the FIM blocks associated with \(\omega\), \(\beta_{\mathrm{R}}\), and \(\beta_{\mathrm{I}}\). The corresponding partial derivatives are
\begin{align}
\frac{\partial \tilde{\boldsymbol{\mu}}}{\partial \omega}
&=
(\mathbf{X}^{T}\otimes \mathbf{I})\,\beta\,\dot{\boldsymbol{\phi}}(\omega),\\
\frac{\partial \tilde{\boldsymbol{\mu}}}{\partial \beta_{\mathrm{R}}}
&=
(\mathbf{X}^{T}\otimes \mathbf{I})\,\boldsymbol{\phi}(\omega),\\
\frac{\partial \tilde{\boldsymbol{\mu}}}{\partial \beta_{\mathrm{I}}}
&=
j(\mathbf{X}^{T}\otimes \mathbf{I})\,\boldsymbol{\phi}(\omega).
\end{align}
Substituting \(\partial \tilde{\boldsymbol{\mu}}/\partial \omega\) into (19) and adopting the snapshot approximation \(\mathbf{X}^{*}\mathbf{X}^{T}\approx L\mathbf{R}_{s}^{T}\), we obtain
\begin{equation}
F_{\omega\omega}
=
\frac{2L}{\sigma_r^2}
|\beta|^2
\dot{\boldsymbol{\phi}}^H(\omega)
(\mathbf{R}_s^T\otimes \mathbf{I})
\dot{\boldsymbol{\phi}}(\omega).
\label{eq:app_Fww_raw}
\end{equation}
Similarly,
\begin{align}
F_{\beta_{\mathrm{R}}\beta_{\mathrm{R}}}
&=
F_{\beta_{\mathrm{I}}\beta_{\mathrm{I}}}
=
\frac{2L}{\sigma_r^2}
\boldsymbol{\phi}^H(\omega)
(\mathbf{R}_s^T\otimes \mathbf{I})
\boldsymbol{\phi}(\omega),\\
F_{\omega\beta_{\mathrm{R}}}
&=
\frac{2L}{\sigma_r^2}
\Re\!\left\{
\beta^{*}
\dot{\boldsymbol{\phi}}^{H}(\omega)
(\mathbf{R}_{s}^{T}\otimes \mathbf{I})
\boldsymbol{\phi}(\omega)
\right\},\\
F_{\omega\beta_{\mathrm{I}}}
&=
-\frac{2L}{\sigma_r^2}
\Im\!\left\{
\beta^{*}
\dot{\boldsymbol{\phi}}^{H}(\omega)
(\mathbf{R}_{s}^{T}\otimes \mathbf{I})
\boldsymbol{\phi}(\omega)
\right\},
\end{align}
and \(F_{\beta_{\mathrm{R}}\beta_{\mathrm{I}}}=0\), since
\(
\boldsymbol{\phi}^{H}(\omega)(\mathbf{R}_s^{T}\otimes\mathbf{I})\boldsymbol{\phi}(\omega)
\)
is a Hermitian quadratic form and is therefore real-valued.

For compactness, let
\(
\mathbf{A}= \mathbf{R}_{s}^{T}\otimes \mathbf{I},
\)
\(
\mathbf{x}= \dot{\boldsymbol{\phi}}(\omega),
\)
\(
\mathbf{y}= \boldsymbol{\phi}(\omega),
\)
and define
\(
t= \mathbf{x}^{H}\mathbf{A}\mathbf{x},
\)
\(
q= \mathbf{x}^{H}\mathbf{A}\mathbf{y},
\)
\(
u= \mathbf{y}^{H}\mathbf{A}\mathbf{y}.
\)
Then, with the nuisance parameter vector
\(
\boldsymbol{\eta}=[\beta_{\mathrm{R}},\beta_{\mathrm{I}}]^{T},
\)
the nuisance block satisfies
\begin{equation}
\mathbf{F}_{\eta\eta}
=
\frac{2L}{\sigma_{r}^{2}}u\,\mathbf{I}_{2},
\end{equation}
and the Schur complement yields
\begin{equation}
F_{\omega\omega}
-
\mathbf{F}_{\omega\eta}\mathbf{F}_{\eta\eta}^{-1}\mathbf{F}_{\eta\omega}
=
\frac{2L}{\sigma_{r}^{2}}|\beta|^{2}
\left(
t-\frac{|q|^{2}}{u}
\right).
\label{eq:appA_schur_single}
\end{equation}

Moreover,
\begin{equation}
t-\frac{|q|^{2}}{u}
=
\mathbf{x}^{H}\mathbf{A}\mathbf{x}
-
\frac{|\mathbf{x}^{H}\mathbf{A}\mathbf{y}|^{2}}{\mathbf{y}^{H}\mathbf{A}\mathbf{y}}
=
\min_{c\in\mathbb{C}}
(\mathbf{x}-c\mathbf{y})^{H}\mathbf{A}(\mathbf{x}-c\mathbf{y}),
\end{equation}
where the minimizer is
\(c^\star =
\frac{\mathbf y^H\mathbf A\mathbf x}
{\mathbf y^H\mathbf A\mathbf y}\).
Under the setting in Theorem 1, the sensing transmit power of
each sensing transmit antenna is kept fixed as \(M_s\) grows.
Moreover, the sensing covariance \(\mathbf R_s\) is considered
full-rank and uniformly well-conditioned. Hence, its eigenvalues
are uniformly bounded above and below with respect to \(M_s\),
and
$\operatorname{tr}(\mathbf R_s)=\Theta(M_1),$
i.e., the total sensing transmit power grows linearly with the
number of sensing transmit antennas. Therefore, for any vector
\(\mathbf v\), we have
\[
\lambda_{\min}(\mathbf R_s)\|\mathbf v\|^2
\le
\mathbf v^H\mathbf A\mathbf v
\le
\lambda_{\max}(\mathbf R_s)\|\mathbf v\|^2.
\]

Using the standard identity
\(
\min_{c\in\mathbb{C}}\|\mathbf{x}-c\mathbf{y}\|^{2}
=
\|\mathbf{x}\|^{2}
-
\frac{|\mathbf{x}^{H}\mathbf{y}|^{2}}{\|\mathbf{y}\|^{2}},
\)
and restoring the original notation \(\mathbf{x}=\dot{\boldsymbol{\phi}}(\omega)\), \(\mathbf{y}=\boldsymbol{\phi}(\omega)\), we further obtain
\begin{equation}
\begin{aligned}
&\frac{2L}{\sigma_{r}^{2}}|\beta|^{2}\lambda_{\min}(\mathbf{R}_{s})
\Biggl(
\dot{\boldsymbol{\phi}}^{H}(\omega)\dot{\boldsymbol{\phi}}(\omega)
-
\frac{
\left|
\dot{\boldsymbol{\phi}}^{H}(\omega)\boldsymbol{\phi}(\omega)
\right|^{2}
}{
\boldsymbol{\phi}^{H}(\omega)\boldsymbol{\phi}(\omega)
}
\Biggr)
\\
&\le
F_{\omega\omega}-\mathbf{F}_{\omega\eta}\mathbf{F}_{\eta\eta}^{-1}\mathbf{F}_{\eta\omega}
\\
&\le
\frac{2L}{\sigma_{r}^{2}}|\beta|^{2}\lambda_{\max}(\mathbf{R}_{s})
\Biggl(
\dot{\boldsymbol{\phi}}^{H}(\omega)\dot{\boldsymbol{\phi}}(\omega)
-
\frac{
\left|
\dot{\boldsymbol{\phi}}^{H}(\omega)\boldsymbol{\phi}(\omega)
\right|^{2}
}{
\boldsymbol{\phi}^{H}(\omega)\boldsymbol{\phi}(\omega)
}
\Biggr).
\end{aligned}
\label{eq:appA_bound_proj}
\end{equation}

For the proposed CPA sensing architecture, the transmit and
receive steering vectors with respect to the spatial frequency
$\omega$ are given by
\begin{equation}
a_t(\omega)=
\big[1,e^{jM_2\omega},e^{j2M_2\omega},\ldots,e^{j(M_1-1)M_2\omega}\big]^T,
\label{eq:appA_at_cpa}
\end{equation}
\begin{equation}
a_r(\omega)=
\big[1,e^{jM_1\omega},e^{j2M_1\omega},\ldots,e^{j(M_2-1)M_1\omega}\big]^T,
\label{eq:appA_ar_cpa}
\end{equation}
and the corresponding virtual manifold is
\begin{equation}
\phi(\omega)=a_t^*(\omega)\otimes a_r^*(\omega).
\label{eq:appA_phi_cpa}
\end{equation}
Let
$D_t=\mathrm{diag}(0,M_2,2M_2,\ldots,(M_1-1)M_2),
D_r=\mathrm{diag}(0,M_1,2M_1,\ldots,(M_2-1)M_1),$
so that $\dot a_t(\omega)=jD_t a_t(\omega)$ and
$\dot a_r(\omega)=jD_r a_r(\omega)$. Hence,
\begin{equation}
\dot\phi(\omega)
=
\dot a_t^*(\omega)\otimes a_r^*(\omega)
+
a_t^*(\omega)\otimes \dot a_r^*(\omega).
\label{eq:appA_phi_dot_expand}
\end{equation}

To evaluate the projected term in \eqref{eq:appA_bound_proj}, it
suffices to compute the three inner products
$\phi^H(\omega)\phi(\omega)$,
$\dot\phi^H(\omega)\phi(\omega)$, and
$\dot\phi^H(\omega)\dot\phi(\omega)$.
Using the mixed-product property of the Kronecker product,
we first obtain
\begin{equation}
\phi^H(\omega)\phi(\omega)
=
\big(a_t^T(\omega)a_t^*(\omega)\big)
\big(a_r^T(\omega)a_r^*(\omega)\big)
=
M_1M_2.
\label{eq:appA_phi_phi_cpa}
\end{equation}

Next, by \eqref{eq:appA_phi_dot_expand},
\begin{align}
\dot\phi^H(\omega)\phi(\omega)
&=
\big(\dot a_t^H(\omega)a_t^*(\omega)\big)
\big(a_r^T(\omega)a_r^*(\omega)\big)
\nonumber\\
&\quad+
\big(a_t^T(\omega)a_t^*(\omega)\big)
\big(\dot a_r^H(\omega)a_r^*(\omega)\big).
\label{eq:appA_q_expand_cpa}
\end{align}
Using
$\sum_{m=0}^{M_1-1}m=\frac{M_1(M_1-1)}{2},
\sum_{n=0}^{M_2-1}n=\frac{M_2(M_2-1)}{2},$
we have
\begin{equation}
\dot\phi^H(\omega)\phi(\omega)
=
\frac{jM_1M_2}{2}\big(2M_1M_2-M_1-M_2\big),
\label{eq:appA_phiDot_phi_cpa}
\end{equation}
and hence
\begin{equation}
\frac{\left|\dot\phi^H(\omega)\phi(\omega)\right|^2}
{\phi^H(\omega)\phi(\omega)}
=
\frac{M_1M_2}{4}\big(2M_1M_2-M_1-M_2\big)^2.
\label{eq:appA_proj_second_term_cpa}
\end{equation}

Similarly, by expanding \eqref{eq:appA_phi_dot_expand} and using
$\sum_{m=0}^{M_1-1}m^2=\frac{M_1(M_1-1)(2M_1-1)}{6},
\sum_{n=0}^{M_2-1}n^2=\frac{M_2(M_2-1)(2M_2-1)}{6},$
straightforward algebra yields
% \begin{equation}
% \dot\phi^H(\omega)\dot\phi(\omega)
% =
% \frac{M_1M_2}{6}
% \Big[
% M_2^2(M_1-1)(2M_1-1)
% +
% M_1^2(M_2-1)(2M_2-1)
% +
% 3M_1M_2(M_1-1)(M_2-1)
% \Big].
% \label{eq:appA_phiDot_phiDot_cpa}
% \end{equation}
% 拆分后的长公式（双栏完美显示）
\begin{equation}
\begin{split}
\dot\phi^H(\omega)\dot\phi(\omega)
=
\frac{M_1M_2}{6}
\Big[
M_2^2(M_1-1)(2M_1-1)
\\
+
M_1^2(M_2-1)(2M_2-1) 
+
3M_1M_2(M_1-1)(M_2-1)
\Big].
\end{split}
\label{eq:appA_phiDot_phiDot_cpa}
\end{equation}

Substituting \eqref{eq:appA_phi_phi_cpa},
\eqref{eq:appA_proj_second_term_cpa}, and
\eqref{eq:appA_phiDot_phiDot_cpa} into the projected quantity in
\eqref{eq:appA_bound_proj}, we obtain
\begin{equation}
\small
\dot\phi^H(\omega)\dot\phi(\omega)
-
\frac{\left|\dot\phi^H(\omega)\phi(\omega)\right|^2}
{\phi^H(\omega)\phi(\omega)}
=
\frac{M_1M_2}{12}\big(2M_1^2M_2^2-M_1^2-M_2^2\big).
\label{eq:appA_proj_cpa_closed}
\end{equation}

Substituting $M_1=\rho(M_s+1)$ and $M_2=(1-\rho)(M_s+1)$ into
\eqref{eq:appA_proj_cpa_closed}, the leading-order term becomes
\begin{equation}
\dot\phi^H(\omega)\dot\phi(\omega)
-
\frac{\left|\dot\phi^H(\omega)\phi(\omega)\right|^2}
{\phi^H(\omega)\phi(\omega)}
\sim
\frac{1}{6}\rho^3(1-\rho)^3(M_s+1)^6,
\label{eq:appA_proj_cpa_asym}
\end{equation}
which implies
\begin{equation}
\dot\phi^H(\omega)\dot\phi(\omega)
-
\frac{\left|\dot\phi^H(\omega)\phi(\omega)\right|^2}
{\phi^H(\omega)\phi(\omega)}
=
\Theta(M_s^6).
\label{eq:appA_proj_cpa_theta}
\end{equation}
Together with \eqref{eq:appA_schur_single} and
\eqref{eq:appA_bound_proj}, this yields
\begin{equation}
\mathrm{CRB}_{\omega,\mathrm{CPA}}=\Theta(M_s^{-6}).
\label{eq:appA_crb_cpa_theta}
\end{equation}

% For the partitioned ULA sensing benchmark, let
% \begin{align}
% \bar{\mathbf{a}}_{t}(\omega)
% &=
% \big[
% 1,e^{j\omega},e^{j2\omega},\ldots,e^{j(M_{1}-1)\omega}
% \big]^{T},\\
% \bar{\mathbf{a}}_{r}(\omega)
% &=
% \big[
% 1,e^{j\omega},e^{j2\omega},\ldots,e^{j(M_{2}-1)\omega}
% \big]^{T},
% \end{align}
% and define
% \begin{equation}
% \bar{\boldsymbol{\phi}}(\omega)=\bar{\mathbf{a}}_{t}^{*}(\omega)\otimes \bar{\mathbf{a}}_{r}^{*}(\omega).
% \end{equation}
% Following the same derivation as in the CPA case, one obtains
% \begin{equation}
% \dot{\bar{\boldsymbol{\phi}}}^{H}(\omega)\dot{\bar{\boldsymbol{\phi}}}(\omega)
% -
% \frac{
% \left|
% \dot{\bar{\boldsymbol{\phi}}}^{H}(\omega)\bar{\boldsymbol{\phi}}(\omega)
% \right|^{2}
% }{
% \bar{\boldsymbol{\phi}}^{H}(\omega)\bar{\boldsymbol{\phi}}(\omega)
% }
% =
% \frac{M_{1}M_{2}}{12}\bigl(M_{1}^{2}+M_{2}^{2}-2\bigr).
% \end{equation}
% Hence, under
% \(
% M_{1}=\rho(M_{s}+1)
% \)
% and
% \(
% M_{2}=(1-\rho)(M_{s}+1)
% \),
% its leading-order term satisfies
% \begin{equation}
% \dot{\bar{\boldsymbol{\phi}}}^{H}(\omega)\dot{\bar{\boldsymbol{\phi}}}(\omega)
% -
% \frac{
% \left|
% \dot{\bar{\boldsymbol{\phi}}}^{H}(\omega)\bar{\boldsymbol{\phi}}(\omega)
% \right|^{2}
% }{
% \bar{\boldsymbol{\phi}}^{H}(\omega)\bar{\boldsymbol{\phi}}(\omega)
% }
% \sim
% \frac{1}{12}\rho(1-\rho)\bigl(\rho^{2}+(1-\rho)^{2}\bigr)(M_{s}+1)^{4},
% \end{equation}
% which, by the same argument, implies
% \begin{equation}
% \mathrm{CRB}_{\omega,\mathrm{ULA}}=\Theta(M_{s}^{-4}).
% \end{equation}

For the partitioned ULA sensing benchmark, let
\begin{equation}
\bar{a}_t(\omega)=
\big[1,e^{j\omega},e^{j2\omega},\ldots,e^{j(M_1-1)\omega}\big]^T,
\label{eq:appA_at_ula}
\end{equation}
\begin{equation}
\bar{a}_r(\omega)=
\big[1,e^{j\omega},e^{j2\omega},\ldots,e^{j(M_2-1)\omega}\big]^T,
\label{eq:appA_ar_ula}
\end{equation}
and define
\begin{equation}
\bar{\phi}(\omega)=\bar{a}_t^*(\omega)\otimes \bar{a}_r^*(\omega).
\label{eq:appA_phi_ula}
\end{equation}
Using the same asymptotic sequence  as in the CPA case, one can
similarly obtain the projected quantity associated with the ULA.
The intermediate expressions for
$\bar{\phi}^H(\omega)\bar{\phi}(\omega)$,
$\dot{\bar{\phi}}^H(\omega)\bar{\phi}(\omega)$, and
$\dot{\bar{\phi}}^H(\omega)\dot{\bar{\phi}}(\omega)$
follow analogously and are omitted for brevity. As a result,
\begin{equation}
\dot{\bar{\phi}}^H(\omega)\dot{\bar{\phi}}(\omega)
-
\frac{
\left|
\dot{\bar{\phi}}^H(\omega)\bar{\phi}(\omega)
\right|^2
}{
\bar{\phi}^H(\omega)\bar{\phi}(\omega)
}
=
\frac{M_1M_2}{12}\big(M_1^2+M_2^2-2\big).
\label{eq:appA_proj_ula_closed}
\end{equation}
Hence, under $M_1=\rho(M_s+1)$ and $M_2=(1-\rho)(M_s+1)$,
its leading-order term satisfies
\begin{equation}
\small
\dot{\bar{\phi}}^H(\omega)\dot{\bar{\phi}}(\omega)
-
\frac{
\left|
\dot{\bar{\phi}}^H(\omega)\bar{\phi}(\omega)
\right|^2
}{
\bar{\phi}^H(\omega)\bar{\phi}(\omega)
}
\sim
\frac{1}{12}\rho(1-\rho)\big(\rho^2+(1-\rho)^2\big)(M_s+1)^4,
\label{eq:appA_proj_ula_asym}
\end{equation}
which implies
\begin{equation}
\dot{\bar{\phi}}^H(\omega)\dot{\bar{\phi}}(\omega)
-
\frac{
\left|
\dot{\bar{\phi}}^H(\omega)\bar{\phi}(\omega)
\right|^2
}{
\bar{\phi}^H(\omega)\bar{\phi}(\omega)
}
=
\Theta(M_s^4).
\label{eq:appA_proj_ula_theta}
\end{equation}
Together with \eqref{eq:appA_schur_single} and
\eqref{eq:appA_bound_proj}, this yields
\begin{equation}
\mathrm{CRB}_{\omega,\mathrm{ULA}}=\Theta(M_s^{-4}).
\label{eq:appA_crb_ula_theta}
\end{equation}

\section{Proof of Theorem 2}
\label{app:multi_target}
\setcounter{equation}{0}
\renewcommand{\theequation}{B.\arabic{equation}}

For the multi-target case, define the parameter vector as
$\boldsymbol{\xi}
=
\begin{bmatrix}
\boldsymbol{\omega}^{T} &
\boldsymbol{\eta}^{T}
\end{bmatrix}^{T},
\boldsymbol{\eta}
=
\begin{bmatrix}
\Re\{\boldsymbol{\beta}\}^{T} &
\Im\{\boldsymbol{\beta}\}^{T}
\end{bmatrix}^{T},
\label{eq:appB_xi}$
where
\(
\boldsymbol{\omega}=[\omega_1,\ldots,\omega_{K_{\mathrm{s}}}]^{T}
\)
and
\(
\boldsymbol{\beta}=[\beta_1,\ldots,\beta_{K_{\mathrm{s}}}]^{T}
\).
Then, the full mean-based FIM can be partitioned as
\begin{equation}
\mathbf{F}
=
\begin{bmatrix}
\mathbf{F}_{\omega\omega} & \mathbf{F}_{\omega\eta}\\
\mathbf{F}_{\eta\omega} & \mathbf{F}_{\eta\eta}
\end{bmatrix}.
\label{eq:appB_F_block}
\end{equation}
By the block matrix inversion formula, the multi-target spatial-frequency CRB matrix is
\begin{equation}
\mathbf{CRB}_{\boldsymbol{\omega}}
=
\left(
\mathbf{F}_{\omega\omega}
-
\mathbf{F}_{\omega\eta}\mathbf{F}_{\eta\eta}^{-1}\mathbf{F}_{\eta\omega}
\right)^{-1}.
\label{eq:appB_CRB_block}
\end{equation}
Hence, it suffices to study the asymptotic scaling of
\begin{equation}
\widetilde{\mathbf{F}}_{\omega\omega}
=
\mathbf{F}_{\omega\omega}
-
\mathbf{F}_{\omega\eta}\mathbf{F}_{\eta\eta}^{-1}\mathbf{F}_{\eta\omega}.
\label{eq:appB_Feff_def}
\end{equation}

% Next, define
% $\mathbf{\Phi}
% =
% \big[
% \boldsymbol{\phi}(\omega_1),\ldots,\boldsymbol{\phi}(\omega_{K_{\mathrm{s}}})
% \big],
% \dot{\mathbf{\Phi}}
% =
% \big[
% \dot{\boldsymbol{\phi}}(\omega_1),\ldots,\dot{\boldsymbol{\phi}}(\omega_{K_{\mathrm{s}}})
% \big],$
% and let
% $\mathbf{A}= \mathbf{R}_{\mathrm{s}}^{T}\otimes \mathbf{I}.$

% Under the fixed sensing power budget, the eigenvalues of \(\mathbf{R}_{\mathrm{s}}\) remain uniformly bounded above and below. 
Next, define
\(\boldsymbol{\Phi}=
[\boldsymbol{\phi}(\omega_1),\ldots,
\boldsymbol{\phi}(\omega_{K_s})]\),
\(\dot{\boldsymbol{\Phi}}=
[\dot{\boldsymbol{\phi}}(\omega_1),\ldots,
\dot{\boldsymbol{\phi}}(\omega_{K_s})]\), and let
\(\mathbf A=\mathbf R_s^T\otimes\mathbf I\).
Under the setting in Theorems 1 and 2, the sensing transmit
power of each sensing transmit antenna is kept fixed as \(M_s\)
grows. Moreover, the sensing covariance \(\mathbf R_s\) is
considered full-rank and uniformly well-conditioned. Hence, its
eigenvalues are uniformly bounded above and below with respect
to \(M_s\), and
$\operatorname{tr}(\mathbf R_s)=\Theta(M_1).$
Therefore, inserting \(\mathbf{A}\) into the relevant quadratic forms changes only multiplicative constants and does not alter their asymptotic orders with respect to \(M_{\mathrm{s}}\). It thus suffices to characterize the corresponding inner products among \(\boldsymbol{\phi}(\omega_m)\) and \(\dot{\boldsymbol{\phi}}(\omega_m)\).

For any \(m\neq n\), Definition~1 implies that
\begin{equation}
\left|
\sin\!\left(\frac{M_1(\omega_m-\omega_n)}{2}\right)
\right|
\ge \delta,
\left|
\sin\!\left(\frac{M_2(\omega_m-\omega_n)}{2}\right)
\right|
\ge \delta.
\label{eq:appB_nondeg}
\end{equation}
This condition prevents the target-pair spatial-frequency difference from approaching the aliasing singularities induced by the sparse CPA geometry. 
Hence, the pairwise spatial-frequency difference
\(
\Delta\omega_{mn}\triangleq \omega_m-\omega_n
\)
stays away from the aliasing singularities induced by the two sparse spacings \(M_1\) and \(M_2\). As a result, the cross inner products can be reduced to oscillatory sums with respect to \(\Delta\omega_{mn}\), including the unweighted form \(\sum e^{j\ell\Delta\omega_{mn}}\), the first-order weighted form \(\sum \ell e^{j\ell\Delta\omega_{mn}}\), and the second-order weighted form \(\sum \ell^2 e^{j\ell\Delta\omega_{mn}}\). Under \eqref{eq:appB_nondeg}, the denominators of the corresponding geometric-series expressions remain bounded away from zero, and hence these sums do not add coherently. Therefore, for \(m\neq n\), one obtains
\begin{align}
\boldsymbol{\phi}^{H}(\omega_m)\mathbf{A}\boldsymbol{\phi}(\omega_n)
&= O(1), 
\label{eq:appB_cross_U}\\
\dot{\boldsymbol{\phi}}^{H}(\omega_m)\mathbf{A}\boldsymbol{\phi}(\omega_n)
&= O(M_{\mathrm{s}}^{2}), 
\label{eq:appB_cross_Q}\\
\dot{\boldsymbol{\phi}}^{H}(\omega_m)\mathbf{A}\dot{\boldsymbol{\phi}}(\omega_n)
&= O(M_{\mathrm{s}}^{4}), \qquad m\neq n.
\label{eq:appB_cross_T}
\end{align}
These bounds show that all inter-target cross terms are asymptotically weaker than their corresponding  diagonal counterparts.

We next consider the diagonal entries of \(\widetilde{\mathbf{F}}_{\omega\omega}\). Appendix~A has already established that, for the CPA sensing geometry,
\begin{equation}
\boldsymbol{\phi}^{H}(\omega_m)\mathbf{A}\boldsymbol{\phi}(\omega_m)=\Theta(M_{\mathrm{s}}^{2}),
\label{eq:appB_diag_U}
\end{equation}
and
\begin{equation}
\small
\dot{\boldsymbol{\phi}}^{H}(\omega_m)\mathbf{A}\dot{\boldsymbol{\phi}}(\omega_m)
-
\frac{
\left|
\dot{\boldsymbol{\phi}}^{H}(\omega_m)\mathbf{A}\boldsymbol{\phi}(\omega_m)
\right|^{2}
}{
\boldsymbol{\phi}^{H}(\omega_m)\mathbf{A}\boldsymbol{\phi}(\omega_m)
}
=
\Theta(M_{\mathrm{s}}^{6}).
\label{eq:appB_diag_proj}
\end{equation}
The second quantity is precisely the single-target projected term after eliminating the nuisance reflection coefficient. Since the cross-target contributions entering the \(m\)-th diagonal are lower-order by \eqref{eq:appB_cross_U}--\eqref{eq:appB_cross_T}, each diagonal entry of \(\widetilde{\mathbf{F}}_{\omega\omega}\) inherits the same leading-order scaling, namely,
\begin{equation}
\big[\widetilde{\mathbf{F}}_{\omega\omega}\big]_{m,m}
=
\Theta(M_{\mathrm{s}}^{6}),
\qquad
m=1,\ldots,K_{\mathrm{s}}.
\label{eq:appB_Feff_diag}
\end{equation}

We now turn to the off-diagonal entries. By construction, the blocks \(\mathbf{F}_{\omega\omega}\), \(\mathbf{F}_{\omega\eta}\), and \(\mathbf{F}_{\eta\eta}\) are all formed from the same cross inner products controlled above. Hence, the Schur correction term
\(
\mathbf{F}_{\omega\eta}\mathbf{F}_{\eta\eta}^{-1}\mathbf{F}_{\eta\omega}
\)
cannot increase the off-diagonal order beyond that already induced by the cross terms in \(\mathbf{F}_{\omega\omega}\). Consequently, after the exact Schur-complement reduction, the off-diagonal entries remain lower-order and satisfy
\begin{equation}
\big[\widetilde{\mathbf{F}}_{\omega\omega}\big]_{m,n}
=
O(M_{\mathrm{s}}^{4}),
\qquad
m\neq n.
\label{eq:appB_Feff_offdiag}
\end{equation}
Therefore, for fixed \(K_{\mathrm{s}}\), the matrix \(\widetilde{\mathbf{F}}_{\omega\omega}\) is dominated by its diagonal principal terms for sufficiently large \(M_{\mathrm{s}}\), while the off-diagonal entries contribute only lower-order perturbations. Hence, the inverse matrix is governed by the inverse order of the diagonal principal terms, and we obtain
\begin{equation}
\operatorname{tr}\!\left(\mathbf{CRB}_{\boldsymbol{\omega},\mathrm{CPA}}\right)
=
\operatorname{tr}\!\left(\widetilde{\mathbf{F}}_{\omega\omega}^{-1}\right)
=
O(M_{\mathrm{s}}^{-6}).
\label{eq:appB_trace_CPA}
\end{equation}
This proves the CPA part of Theorem~2.

For the partitioned ULA sensing benchmark, the same proof procedure applies. In this case, the corresponding target placement is assumed to stay away from the ULA degeneracy points, i.e., the pairwise spatial-frequency differences remain bounded away from the singular locations of the associated ULA oscillatory sums. Under this condition, the diagonal entries satisfy
\begin{equation}
\big[\widetilde{\mathbf{F}}_{\omega\omega}\big]_{m,m}
=
\Theta(M_{\mathrm{s}}^{4}),
\qquad
m=1,\ldots,K_{\mathrm{s}},
\label{eq:appB_Feff_diag_ULA}
\end{equation}
whereas the off-diagonal entries remain two orders weaker, namely,
\begin{equation}
\big[\widetilde{\mathbf{F}}_{\omega\omega}\big]_{m,n}
=
O(M_{\mathrm{s}}^{2}),
\qquad
m\neq n.
\label{eq:appB_Feff_offdiag_ULA}
\end{equation}
Therefore, for sufficiently large \(M_{\mathrm{s}}\), the inverse matrix is again determined by the inverse order of the diagonal principal terms, which yields
\begin{equation}
\operatorname{tr}\!\left(\mathbf{CRB}_{\boldsymbol{\omega},\mathrm{ULA}}\right)
=
O(M_{\mathrm{s}}^{-4}).
\label{eq:appB_trace_ULA}
\end{equation}
Combining \eqref{eq:appB_trace_CPA} and \eqref{eq:appB_trace_ULA} completes the proof.

	%\newpage

\end{document}